\newtheorem{theorem}{Theorem}[section]
\newtheorem{definition}[theorem]{Definition}
\newtheorem{proof*}{proof }
\newcommand{\bean}{\begin{eqnarray*}}
\newcommand{\eean}{\end{eqnarray*}}
\newcommand{\ba}{\begin{array}}
\newcommand{\ea}{\end{array}}
\newcommand{\be}{\begin{equation}}
\newcommand{\ee}{\end{equation}}
\newcommand{\bea}{\begin{eqnarray}}
\newcommand{\eea}{\end{eqnarray}}
\newcommand{\pa}{\partial}
\newcommand{\no}{\nonumber}
\newcommand{\om}{\omega}
\begin{document}

\title
{Constructions of Totally Non-Negative Pfaffian}
\author{
 Jen-Hsu Chang \\Graduate School of National Defense, \\
 National Defense University, \\
 Tauyuan City,  335009, Taiwan }

\date{}

\maketitle
\begin{abstract}
The totally non-negative pfaffian (TNNP) is define for  a skew-symmetric matrix such that all the sub-pfaffians are
non-negative. It appears in the pfaffian structure of $\tau$-function for the non-singular web solitons of the BKP equation . One constructs  TNNP  using the Perfect matching, chord diagram and the Dyck  paths enumeration.   Its tridiagonal form is also investigated. 

\end{abstract}
Keywords: BKP equation, Totally non-negative pfaffian, Perfect matching,  Tridiagonal Form \\
2020 Mathematics Subject Classification: 5C90; 15A15; 15A23; 37K10
\newpage

\section{Introduction} 
The pfaffian structure often appears in the soliton theory via the $\tau$- function theory, such as the  
$B$-Type Kadomtsev-Petviashvili  Equation (BKP) equation \cite{da, dj, de, lw, rh, kv, ni, os}  or the (2+1) Sawada-Kotera equation \cite{ji} and  the Veselov-Novikov equation defined on integrable 2D Schrodinger operator \cite{ba,df, jh1, gr, kr}. Also, the pfaffian technique is used to construct the non-singular rational solutions such that the solutions decay to zero in each  direction  on the plane \cite{gn}. The reason for pfaffian appearing in soliton theory comes from the Hirota equations, whose solutions boil down to some pfaffian identities when $\tau$-function is defined by pfaffian \cite{hi}. Recently, the TNNP \cite{jh2, yt} is used to construct non-singular line solitons and their resonance is investigated for the BKP equation. The BKP equation or the (2+1) Sawada-Kotera equation \cite{ji} is 
\be 
(9 \phi_t-5\phi _{xxy}+\phi_{xxxxx}-15 \phi_x \phi_y+15 \phi_{x}\phi_{xxx}+15\phi_x^3)_x-5 \phi_{yy}=0. \label{bkp}
\ee
This equation arises from the B-type Lie algebras and it is known that  the KP equation arises from the A-type algebras. The BKP hierarchy is sub-hierarchy of KP hierarchy. The Schur Q functions  are used to construct rational solutions \cite{ro, or} and the web soliton solutions from a  given TNNP obtained by a skew Schur Q function \cite {jh2}. The bilinear form of the BKP equation (\ref{bkp}) is \cite{dj}
\be 
(D_x^6-5D_x^3 D_y -5 D_y^2+9 D_x D_t) \tau \circ \tau=0,  \label{hr} 
\ee
where the Hirota derivative is defined as \cite{hi} 
\[D_t^m D_x^n f(t,x) \circ g(t,x) =\frac{\pa^m} {\pa a^m} \frac{\pa^n} {\pa b^n} f(t+a,x+b) g(t-a,x-b) |_{a=0, b=0} \] 
and  $ \phi=2 (\ln \tau)_{xx}$. Now, one defines the TNNP for a non-singular web solitons.  
\begin{definition}\cite{jh2, yt}
Let $A$ be a $2n \times 2n$ skew-symmetric matrix. Then $A$ is a totally positive (non-negative) Pfaffian (TNNP) if every $2m \times 2m ( m \leq n ) $ principal sub skew-symmetric matrix of $A$ has a positive (non-negative)  Pfaffian. 
\end{definition}

Given a $2n \times 2n$ TNNP $A=[a_{ij}]$ ,   the $\tau$-function  \cite{ni, or}  of soliton solution of BKP equation (\ref{bkp}) is defined by  
\bea 
\tau (t) &=& 1+ \sum_{ i<j}   a_{ij} \frac{1}{2} \frac{p_i-p_j}{p_i+p_j} e^{\xi (p_i, t)+\xi (p_j , t)}   \no \\
&+ & \sum_{ i<j<k<l} Pf(i,j,k,l) \frac{1}{2^2}  \frac{p_j-p_i}{p_j+p_i} \frac{p_k-p_i}{p_k+p_i} \frac{p_l-p_i}{p_l+p_i} \frac{p_k-p_j}{p_k+p_j}\frac{p_l-p_j}{p_l+p_j}\frac{p_l-p_k}{p_l+p_k} \no \\
&& e^{\xi (p_i, t)+\xi (p_j , t)+ \xi (p_k, t)+ \xi (p_l, t)} \no \\
&+&\sum_{ i<j<k<l<m<q }  Pf(i, j, k, l, m, q)\frac{1}{2^3}\prod_{  r < s \leq q} \frac{p_{r}-p_{s}}{p_{r}+ p_{s}} e^{\xi (p_i, t)+\xi (p_j , t)+ \xi (p_k, t)+ \xi (p_l, t)+\xi (p_m, t)+\xi (p_q, t)} \no \\
&+&  \cdots + Pf (A)   \frac{1}{2^n}\prod_{  r < s \leq 2n} \frac{p_{r}-p_{s}}{p_{r}+ p_{s}} e^{ \sum_{i=1}^{2n}  \xi (p_i, t) },   \label{tc}
\eea
where $\xi (p , t)=e^{p x+p^3 y+ p^5 t}$,  $ p_1^2 > p_2^2> p_3^2 > \cdots >p_{2n}^2,$  and $Pf(i, j, k, l)$ denotes the pfaffian minor of the  skew-symmetric matrix $A=[a_{ij}]$ containing the $ith, jth, kth $  and $lth$ rows, and similar to $Pf(i, j, k, l, m, q)$. The sum in (\ref{tc}) is over all the sub-pfaffians of  $A=[a_{ij}]$ and one defines $Pf(\emptyset )=1$.  Here the pfaffian $Pf(A)$ is defined by 
\bea  Pf(A) &=&  \sum_{\sigma}\epsilon(\sigma)a_{\sigma_1 \sigma_2}a_{\sigma_3
\sigma_4} \cdots a_{\sigma_{2n-1} \sigma_{2n}}.  \label{pf1} 
\eea
Here   $\epsilon(\sigma)$ is the sign of the permutation $\sigma$,  $    \sigma_1 < \sigma_3<  \sigma_5< \cdots < \sigma_{2n-1}$   and $ \sigma_1 <  \sigma_2  , \sigma_3 <  \sigma_4,   \cdots ,   \sigma_{2n-1} <  \sigma_{2 n} $ . \\

  It's known that resonant interaction plays a fundamental  role in multi-dimensional wave phenomenon. The resonances of line solitons of KP-(II) equation
\be    \pa_x (-4 u_t+u_{xxx}+6uu_x)+ 3u_{yy}=0  \label{kp} \ee
has attracted much attractions using the totally non-negative Grassmannians \cite{bc, ko1, ko3}. The resonance of  line solitions is  described by points of the real Grassmannian whose Plucker coordinates are all non-negative.  For the KP-(II) equation (\ref{kp}), the  $\tau$-function  is described by the Wronskian form with respect to $x$-derivatives \cite{hi}.  Inspired by the success of the totally non-negative Grassmannians applied on the resonances of the KP equation \cite{ko6},  it is natural to  consider TNNP applied on the resonances of the BKP equation (\ref{bkp}) .

\indent The paper is organized as follows. In section 2, one constructs TNNP  using the perfect matching,  chord diagram and the Dyck  paths enumeration.  In section 3, we investigate the tridiagonal form.  In section 4, we conclude the paper with several remarks.

\section{ TNNP from Graph Theory}
In this section, we construct TNNP from perfect matching  in graph theory and then a TNNP could also be obtained from any chord diagram. 
\subsection{ Perfect Matching}
\begin{definition}
A planar graph is  drawn on the plane in such a way that its edges intersect only at their endpoints. Given a graph G with edges E and vertices V, a perfect matching in G is a subset M of E, such that every vertex  in V is adjacent to exactly  one edge in M.  The number of perfect matchings is M(G).  If U is a subset of vertices in V(G), then $G\backslash U$ is the subgraph of G induced by the vertices of  $V\backslash U$. 
\end{definition}
\begin{figure}
	\centering
		\includegraphics[width=1.1\textwidth]{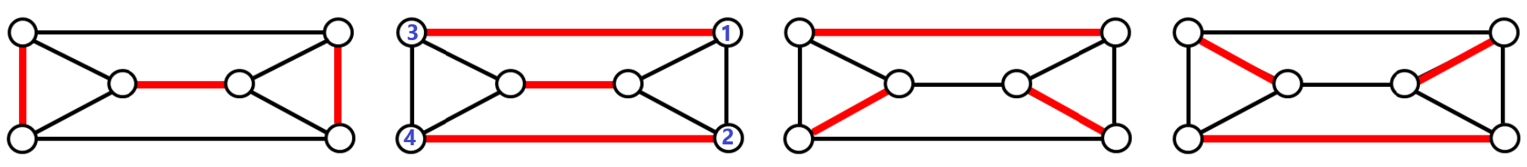}
	\caption{ This plane graph $G$ has six vertices, nine edges and $M(G)=4$. There are four boundary vertices. }
	\end{figure}
For example, please see the figure 1.  \\
\indent Interestingly, the number of perfect matchings of a planar graph with boundary vertices can be found out by the Pfaffian related to the planar graph. 
\begin{theorem}  \cite{ ci, kuo}\\
 Let G be a planar graph with the vertices $a_1, a_2 , ...,a_{2n}$ (boundary vertices) appearing in that cyclic
order among the vertices of some face of G. Consider the skew-symmetric matrix  $ A =
(a_{ij})_{1  \leq i,j \leq 2n} $ with nonzero entries given by
\[  a_{ij} = \left \{ \ba{ll}     M (G\backslash \{ a_i, a_j\} ) ,   \mbox{ if $ i<j $ }   \\
 - M (G\backslash \{ a_i, a_j\} )   ,  \mbox { if $ i>j $ }   \ea       \right.  \]
Then we have that
\[  [M(G)]^{n-1} M(G \backslash \{a_1, a_2  ...,a_{2n}\})=Pf(A). \]
\end{theorem}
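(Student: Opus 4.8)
The plan is to reduce both matching numbers to Pfaffians of a single signed adjacency matrix and then to invoke a Jacobi-type identity for Pfaffians. First I would fix a Pfaffian (Kasteleyn) orientation of the planar graph $G$ and let $B$ be the associated $2N \times 2N$ skew-symmetric signed adjacency matrix, where $2N = |V(G)|$. By Kasteleyn's theorem for planar graphs, $Pf(B) = \pm M(G)$, and more generally, for an even subset $U \subset V(G)$ the principal sub-Pfaffian obtained by deleting the rows and columns indexed by $U$ satisfies $Pf(B_{\widehat{U}}) = \eta_U\, M(G \setminus U)$ for a sign $\eta_U \in \{\pm 1\}$. The hypothesis that $a_1, \ldots, a_{2n}$ occur in this cyclic order on a single face of $G$ is exactly what pins down these signs: the Kasteleyn cycle-parity condition on the bounding face forces the deletion signs to factor through the vertices, $\eta_{\{a_i,a_j\}} = \nu_i \nu_j$ with $\nu_i \in \{\pm 1\}$, so that the congruence $B \mapsto \Lambda B \Lambda$ with $\Lambda = \mathrm{diag}(\lambda_i)$, $\lambda_i = (-1)^{a_i}\nu_i$, turns the matrix of sub-Pfaffians $[Pf(B_{\widehat{a_i}\widehat{a_j}})]$ restricted to the boundary indices into our matrix $A$, entry by entry and with the required $+$ sign for $i<j$.

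The second ingredient is the Pfaffian analogue of Jacobi's theorem on the adjugate. Writing $T = \{a_1, \ldots, a_{2n}\}$ and $\bar T = V(G)\setminus T$ for the internal vertices, I would use the Pfaffian form of Cramer's rule, $(B^{-1})_{ij} = (-1)^{i+j}\,Pf(B_{\widehat{i}\widehat{j}})/Pf(B)$ (with the standard sign convention), together with the Jacobi identity $Pf\big((B^{-1})[T]\big) = \pm\,Pf(B[\bar T])/Pf(B)$. Clearing denominators by the homogeneity $Pf(\lambda C) = \lambda^{n} Pf(C)$ of the $2n\times 2n$ Pfaffian converts this into
\[
Pf\big([\,(-1)^{\,a_i + a_j}\,Pf(B_{\widehat{a_i}\widehat{a_j}})\,]_{1\le i<j\le 2n}\big) \;=\; \pm\, Pf(B)^{\,n-1}\, Pf(B[\bar T]).
\]
Because this is an identity of polynomials in the entries of $B$, it remains valid even when $Pf(B)=0$; this disposes of the degenerate case $M(G)=0$ (for $n \ge 2$ both sides then vanish) without any invertibility hypothesis, and the case $n=1$ is immediate since there $Pf(A)=a_{12}=M(G\setminus\{a_1,a_2\})$ and $[M(G)]^{0}=1$.

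Finally I would assemble the pieces: substitute $Pf(B) = \eta\, M(G)$, $Pf(B[\bar T]) = \eta_0\, M(G\setminus\{a_1,\ldots,a_{2n}\})$ and the entrywise relations $Pf(B_{\widehat{a_i}\widehat{a_j}}) = \eta_{ij}\, M(G\setminus\{a_i,a_j\})$ into the displayed identity, and identify its left-hand side with $Pf(A)$ via the congruence $\Lambda$, picking up a factor $\det\Lambda$. The residual scalar is the product of the signs $\eta,\eta_0,\eta_{ij}$, the factors $(-1)^{a_i+a_j}$, the $\pm$ from Jacobi, and $\det\Lambda$; the content of the theorem is precisely that this product collapses to $+1$. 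I expect this sign collapse to be the \emph{main obstacle}, and the only place where the full strength of the ``cyclic order on a common face'' hypothesis is used: planarity alone yields a Pfaffian orientation, but it is the face/cyclic arrangement that makes the deleted-vertex subgraphs inherit compatible Kasteleyn signs and lets $A$ be read off from $B$ with a single global sign. I would carry out this verification by inducting on the number of boundary vertices peeled off from the distinguished face, tracking how $\eta_U$ changes under each deletion; combined with the two displayed identities this completes the proof.
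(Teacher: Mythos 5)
First, a point of reference: the paper does not prove this statement at all — it is quoted from Kuo and Ciucu, whose published arguments are purely combinatorial (superposition of two matchings with path-swapping, and an induction built on the four-vertex condensation identity, respectively). So your proposal must be judged on its own terms, and on those terms it has a genuine gap. In your Kasteleyn-plus-Jacobi framework, the entire content of the theorem is the sign statement: that the deletion signs $\eta_{ij}$, the factors $(-1)^{a_i+a_j}$, the $\pm$ from the Jacobi identity, and $\det\Lambda$ collapse to exactly $+1$. This is precisely the step you yourself flag as ``the main obstacle,'' and you do not carry it out; asserting that the cycle-parity condition on the bounding face \emph{forces} $\eta_{\{a_i,a_j\}} = \nu_i\nu_j$ is a claim, not an argument. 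Establishing that factorization, and then the final collapse to $+1$ (rather than to a sign depending on $n$ or on the embedding), is essentially as hard as the theorem itself; in the dimer literature this is a genuine result about boundary monomer correlations, proved by arguments (arcs drawn in the distinguished face joining the deleted vertices, crossing parities of such arcs, etc.) that your sketch would still have to supply. A proof whose self-identified crux is deferred to an unspecified induction is a strategy, not a proof.

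Second, one intermediate claim is false as stated: it is \emph{not} true that for an arbitrary even subset $U \subset V(G)$ one has $Pf(B_{\widehat U}) = \eta_U\, M(G \setminus U)$. Deleting interior vertices destroys the Kasteleyn property (faces merge and the clockwise-odd condition fails), so the matchings of $G \setminus U$ contribute with non-constant signs; this is exactly why bulk monomer--monomer correlations are not given by sub-Pfaffians of the Kasteleyn matrix (the Fisher--Stephenson problem). The statement does hold when $U$ lies on the boundary of a single face, re-embedded as the outer face, because then every bounded face of $G \setminus U$ is a bounded face of $G$ and the induced orientation is still Kasteleyn. This means the ``cyclic order on a common face'' hypothesis is already needed for the absolute-value statement, not merely — as your write-up asserts — for pinning down signs. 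The instances you actually use (the pairs $\{a_i,a_j\}$ and the full boundary set $T$) are of this good type, so the error is repairable, but the repair invokes the face hypothesis at a stage you attribute to planarity alone. By contrast, the cited combinatorial proofs never introduce an orientation and confront no sign problem at all, which is why they remain the standard route to this identity.
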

\begin{figure}
	\centering
		\includegraphics[width=1\textwidth]{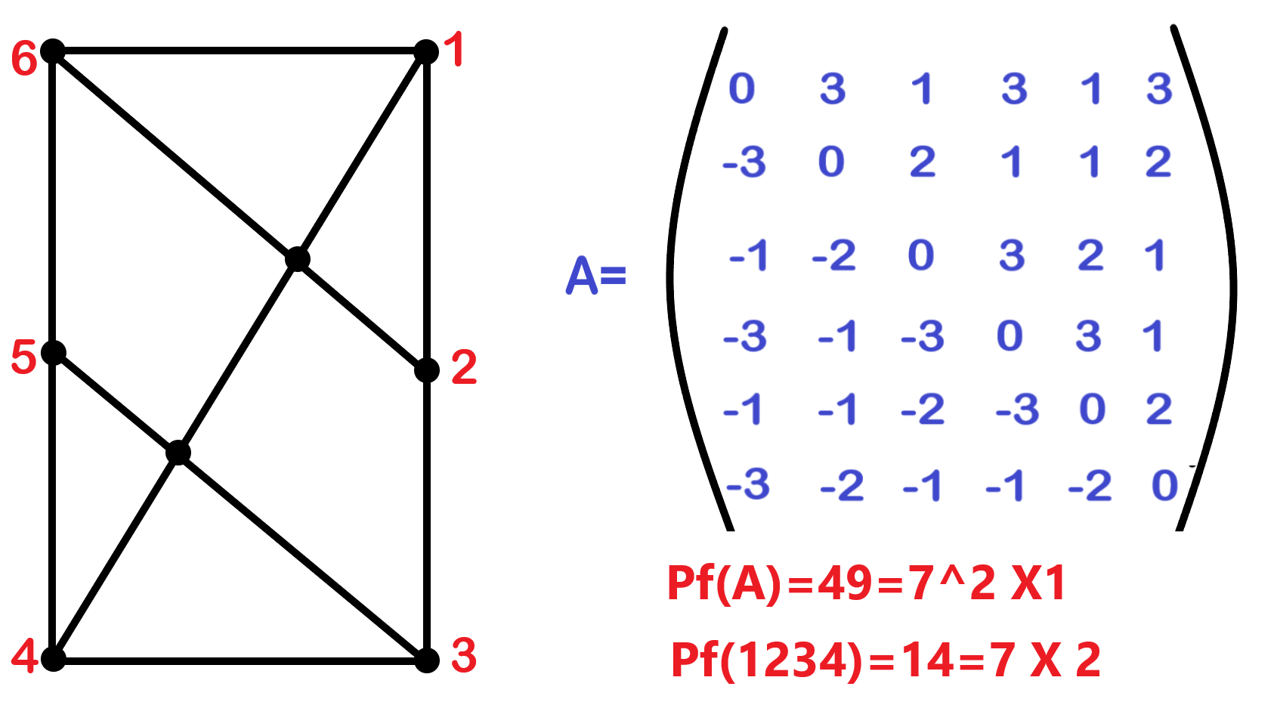}
\caption{ The plane graph $ \textbf {G}$ at left-hand side has eight vertices with six boundary vertices $ (n=3, a_1=1, a_2=2, a_3=3,  \cdots, a_6=6)$, thirteen edges and $M( \textbf {G})=7$. Also, $M(  \textbf{G} \backslash {1,2,3,4})=2$  (k=2) and the matrix $A$ is a TNNP constructed from the graph  $ {\textbf G}$. }	
\end{figure}
From this theorem, we have the following 
\begin{theorem}
For any index $I = \{a_{i_1}, a_{i_2}, \cdots , a_{i_{2k}}\} \subseteq \{a_1, a_2  ...,a_{2n} \} $ in the graph G,  we obtain 
\[   [M(G)]^{k-1} M (G \backslash \{ a_{i_1}, a_{i_2}, \cdots , a_{i_{2k}} \})=Pf(A_I). \]
\end{theorem}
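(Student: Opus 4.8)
The plan is to reduce Theorem 2.3 directly to Theorem 2.2, by recognizing that any even-sized subset $I = \{a_{i_1}, a_{i_2}, \ldots, a_{i_{2k}}\}$ of the boundary vertices is itself a legitimate collection of boundary vertices in cyclic order, so that Theorem 2.2 applies verbatim with $2k$ in place of $2n$. In other words, Theorem 2.3 should be an immediate corollary once one checks that restricting to $I$ preserves all the hypotheses and the sign convention.

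First I would fix the ordering: write $I$ with $i_1 < i_2 < \cdots < i_{2k}$. Since the full list $a_1, a_2, \ldots, a_{2n}$ appears in that cyclic order along the boundary of some face of $G$, any subsequence — in particular $a_{i_1}, a_{i_2}, \ldots, a_{i_{2k}}$ — inherits the same cyclic order along the same face. Hence the pair $(G, \{a_{i_1}, \ldots, a_{i_{2k}}\})$ satisfies the hypotheses of Theorem 2.2 exactly, with $G$ unchanged and $2k$ playing the role of $2n$.

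Next I would identify the skew-symmetric matrix that Theorem 2.2 attaches to this reduced data and match it against $A_I$. By the construction in Theorem 2.2 its $(p,q)$ entry is $+M(G \setminus \{a_{i_p}, a_{i_q}\})$ when $p < q$ and $-M(G \setminus \{a_{i_p}, a_{i_q}\})$ when $p > q$. Because $i_p < i_q$ is equivalent to $p < q$, these entries coincide precisely with $a_{i_p i_q}$, the entries of the principal submatrix $A_I$; in particular no extra relabeling sign is introduced. Thus the matrix produced by Theorem 2.2 for the sub-collection is literally $A_I$. Applying Theorem 2.2 then yields
\[
[M(G)]^{k-1}\, M(G \setminus \{a_{i_1}, a_{i_2}, \ldots, a_{i_{2k}}\}) = Pf(A_I),
\]
which is exactly the claim.

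The only genuine point requiring care — and hence the main obstacle, such as it is — lies in the second step: confirming both that passing to $I$ keeps the vertices in cyclic order on the \emph{same} face (so $G$ need not be altered) and that the sign convention of Theorem 2.2 is respected, so that the principal submatrix $A_I$ is precisely the matrix Theorem 2.2 would independently build from the sub-collection of boundary vertices. Once this bookkeeping is verified, Theorem 2.3 follows with no further computation.
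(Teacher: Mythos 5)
Your proposal is correct and follows essentially the same route as the paper, which presents this theorem as an immediate consequence of the Kuo--Ciucu result (Theorem 2.2): any subsequence $a_{i_1}, \ldots, a_{i_{2k}}$ of boundary vertices still lies in cyclic order on the same face, and the matrix Theorem 2.2 attaches to it is exactly the principal submatrix $A_I$. Your write-up merely makes explicit the bookkeeping (ordering and sign conventions) that the paper leaves implicit.
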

For the figure 1, the  TNNP is 
\[  \left  [\ba{llll} 0 & 1 & 1 &2  \\  - 1 & 0 &2 &1 \\ - 1 & -2 &0 &1 \\  - 2 & -1&-1 &0 \ea  \right ]      \] and its pfaffian is 4. Its soliton graph refers to the figure 1 in \cite{jh2}. The other example is the figure 2. 

Now,  one has 
\begin{theorem}
Given any planar graph $G$ with boundary vertices, we can construct a  totally non-negative pfaffian. Then a soliton solution of the BKP equation (\ref{bkp}) is obtained. 
\end{theorem}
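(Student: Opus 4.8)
The plan is to deduce the statement almost directly from Theorem 2.2, organizing the work into two parts: first that the matrix $A$ built from $G$ is a TNNP, and second that feeding this $A$ into the $\tau$-function (\ref{tc}) yields a genuine non-singular soliton. The substantive content lives entirely in the first part, which is a short corollary of Theorem 2.2; the second part amounts to checking that the prescribed data satisfy the hypotheses under which (\ref{tc}) is known to define a smooth solution.

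For the first part, I would start from a planar graph $G$ whose boundary vertices $a_1, a_2, \ldots, a_{2n}$ appear in cyclic order on a single face, and form the skew-symmetric matrix $A = (a_{ij})$ with $a_{ij} = M(G \backslash \{a_i, a_j\})$ for $i<j$, exactly as prescribed in Theorem 2.1. By construction $A$ is skew-symmetric, so every principal submatrix is again skew-symmetric and the notion of sub-pfaffian is well defined. To establish the TNNP property I must show $Pf(A_I) \geq 0$ for every principal submatrix $A_I$ indexed by an even subset $I = \{a_{i_1}, a_{i_2}, \ldots, a_{i_{2k}}\}$ of the boundary vertices; since the principal submatrices of $A$ are in exact correspondence with such even subsets, this covers all the sub-pfaffians in the Definition of TNNP. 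Here Theorem 2.2 does the essential work, giving $Pf(A_I) = [M(G)]^{k-1}\, M(G \backslash I)$. Because $M(G)$ and $M(G \backslash I)$ are cardinalities of sets of perfect matchings, each is a non-negative integer, hence so is their product; therefore $Pf(A_I) \geq 0$. Ranging over all even $I$ shows every sub-pfaffian is non-negative, so $A$ is a TNNP. The degenerate case $M(G)=0$ causes no trouble: then $Pf(A_I)=0$ for all $k \geq 2$, while for $k=1$ the entries $a_{ij}=M(G\backslash\{a_i,a_j\})$ are still non-negative, so $A$ remains a TNNP.

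For the second part I would substitute the TNNP $A$ into (\ref{tc}). Each summand factors as a strictly positive exponential $e^{\sum_{i\in I}\xi(p_i,t)}$, a non-negative sub-pfaffian $Pf(I)=Pf(A_I)$ supplied by the first part, and a coupling coefficient $\frac{1}{2^k}\prod_{r<s}\frac{p_r-p_s}{p_r+p_s}$. With the parameters chosen positive and decreasing, consistent with the ordering $p_1^2 > p_2^2 > \cdots > p_{2n}^2$, each factor $\frac{p_r-p_s}{p_r+p_s}$ with $r<s$ is positive, so every coupling coefficient is positive. Consequently $\tau$ is a finite sum of non-negative terms together with the constant $1$, whence $\tau > 0$ identically, and $\phi = 2(\ln\tau)_{xx}$ is a globally smooth solution of the BKP equation (\ref{bkp}).

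The step I expect to be the genuine obstacle is the non-singularity claim $\tau > 0$ in full generality, rather than the construction of the TNNP, which is immediate from Theorem 2.2. The signs of the coupling coefficients depend delicately on the chosen $p_i$: for mixed-sign parameters constrained only by $p_1^2 > \cdots > p_{2n}^2$, the factors $\frac{p_r-p_s}{p_r+p_s}$ need not all be positive, so one would either restrict to an admissible parameter regime in which each block of coupling coefficients is non-negative, or invoke the established fact from \cite{jh2, yt} that a TNNP together with this ordering guarantees that $\tau$ never vanishes. Once that positivity is secured, the passage from the TNNP to a non-singular web soliton of (\ref{bkp}) is automatic from the definition of $\tau$.
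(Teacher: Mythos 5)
Your proposal is correct and follows essentially the same route as the paper, which states this theorem without a separate proof precisely because it is the immediate corollary of Theorem 2.2 that you spell out (all sub-pfaffians equal $[M(G)]^{k-1}M(G\backslash I)\geq 0$, hence $A$ is a TNNP, which is then fed into the $\tau$-function (\ref{tc})). The only point worth adding is that the obstacle you flag at the end is not one: since $\frac{p_r-p_s}{p_r+p_s}=\frac{p_r^2-p_s^2}{(p_r+p_s)^2}$, the sign of each such factor is governed by $p_r^2-p_s^2$ alone, so the ordering $p_1^2>p_2^2>\cdots>p_{2n}^2$ already forces every coupling coefficient in (\ref{tc}) to be positive regardless of the signs of the individual $p_i$, giving $\tau\geq 1>0$ with no extra parameter restriction and no need to invoke \cite{jh2, yt}.
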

Please see the figure 3. 
\begin{figure}[h]
	\centering
		\includegraphics[width=1.1\textwidth]{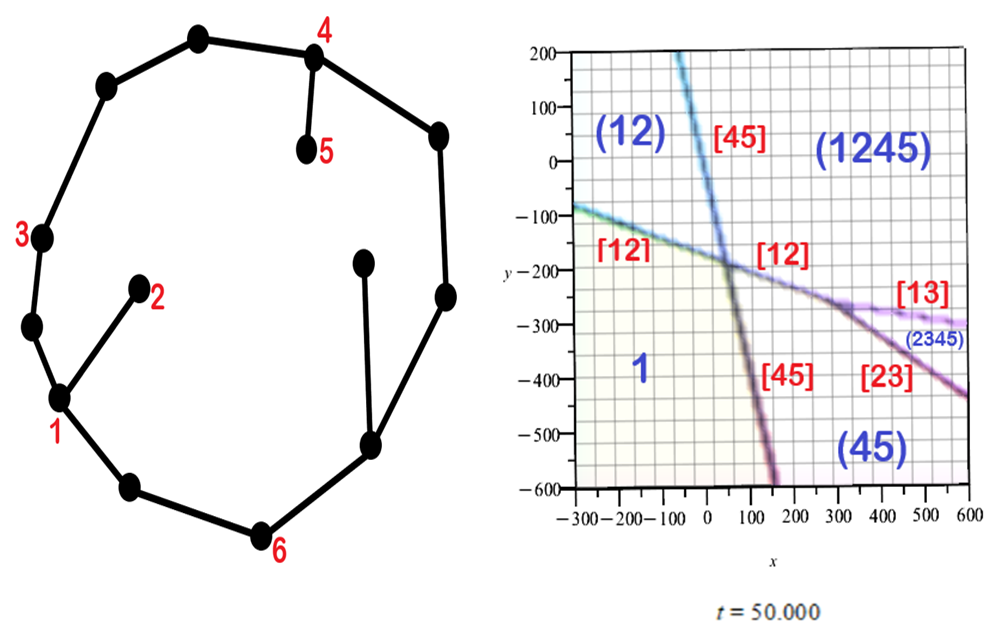}
	\caption{ This plane graph at left-hand side has 14 vertices, 14 edges and $M(G)=1$. There are six numbered boundary vertices. The right-hand is the soliton solution (\ref{bp6})  of BKP from the plane graph with $p_1=2, p_2=1.5, p_3=1, p_4=0.6, p_5=0.3, p_6=0.1$ .  The numbers in blue color are  the sub-pfaffians with dominant exponentials  in the $\tau$-function (\ref{bp6}) at different regions. The numbers in red color
are the $[ij]$-solitons  localized at the boundaries of two distinct regions where a balance exists between two dominant exponentials.	}
	\end{figure}
The singular skew-symmetric matrix associated with the graph is 
\[  \left  [\ba{llllll} 0 & 1 & 0 &0 &0&0   \\ -1 & 0 & 1 &0 &1&1 \\ 0 & -1 & 0 &0 &0&0 \\ 0 & 0& 0 &0 &1&0 
\\ 0 & -1& 0 &-1&0&0 \\0 & -1 & 0 &0 &0&0 \ea  \right ]      \]    
 and its associated $\tau$ function of the BKP equation is 
\bea   
\tau &=&  1+ \frac{1}{2}( \frac{p_1-p_2}{p_1+p_2}e^{\xi (p_1, t)+\xi (p_2 , t)} +    \frac{p_2-p_3}{p_2+p_3}e^{\xi (p_2, t)+\xi (p_3 , t)} +  \frac{p_2-p_5}{p_2+p_5}e^{\xi (p_2, t)+\xi (p_5 , t)}  \no \\
&+ & \frac{p_2-p_6}{p_2+p_6}e^{\xi (p_2, t)+\xi (p_6 , t)} +    \frac{p_4-p_5}{p_4+p_5}e^{\xi (p_4, t)+\xi (p_5 , t)}   ) \no \\
&+&\frac{1}{4} \frac{p_1-p_2}{p_1+p_2}\frac{p_4-p_5}{p_4+p_5}e^{\xi (p_1, t)+\xi (p_2 , t)+\xi (p_4, t)+\xi (p_5 , t)} +\frac{1}{4} \frac{p_2-p_3}{p_2+p_3}\frac{p_4-p_5}{p_4+p_5}e^{\xi (p_2, t)+\xi (p_3, t)+\xi (p_4, t)+\xi (p_5 , t)} \no \\
&+&\frac{1}{4} \frac{p_2-p_6}{p_2+p_6}\frac{p_4-p_5}{p_4+p_5}e^{\xi (p_2, t)+\xi (p_3, t)+\xi (p_4, t)+\xi (p_5 , t)}       \label{bp6}. 
\eea

As an application of this theorem, one considers the chord diagram. 
\begin{definition}
A chord diagram is a diagram of N chords joining $2N$ points on a circle  in disjoint pairs. 
\end{definition}
\begin{theorem}
Given a chord diagram, one can construct a TNNP. 
\end{theorem}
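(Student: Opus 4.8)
The plan is to deduce this statement directly from the planar-graph construction established above, by realizing an arbitrary chord diagram as a planar graph whose boundary vertices are the marked points on the circle. Writing $2N = 2n$, I would first place the $2N$ endpoints of the chord diagram on the boundary circle of a disk and label them $a_1, a_2, \ldots, a_{2n}$ in the cyclic order in which they appear; these will serve as the boundary vertices. Each of the $N$ chords is then drawn as an arc inside the disk joining its two endpoints. Since the chords are interior arcs joining distinct boundary points (the pairs are disjoint, so no two chords share an endpoint), the only obstruction to planarity is the finite set of transverse points where two chords cross.

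Next I would resolve these crossings to produce a genuine planar graph $G$. At each crossing of two chords I insert internal vertices so that the two arcs meet only at a vertex rather than crossing transversally; to keep the total number of vertices even, so that perfect matchings can exist, I would use a small two-vertex gadget at each crossing rather than a single vertex, arranged so as to preserve the cyclic order of the boundary points on the outer face. After this resolution $G$ is an honest planar graph drawn in the disk, and the $2n$ points $a_1, \ldots, a_{2n}$ still appear in that cyclic order among the vertices of the outer face, which is exactly the hypothesis required by the previous construction theorem.

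With $G$ in hand the conclusion is immediate: I form the skew-symmetric matrix $A = (a_{ij})$ with $a_{ij} = M(G \setminus \{a_i, a_j\})$ for $i < j$ and $a_{ij} = -M(G \setminus \{a_i, a_j\})$ for $i > j$. By the theorem on planar graphs, together with its extension to an arbitrary index subset $I$ of size $2k$, every sub-pfaffian satisfies $Pf(A_I) = [M(G)]^{k-1} M(G \setminus I)$, which is a product of non-negative integers and hence non-negative. Thus $A$ is a totally non-negative pfaffian, and by the preceding theorem a soliton solution of the BKP equation (\ref{bkp}) is obtained from it.

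The main obstacle is the crossing resolution itself: I must choose the local gadget so that (i) planarity and the cyclic order of the boundary vertices are genuinely preserved, and (ii) the global number of vertices is even, since a chord diagram may have an odd number of crossings and a single-vertex resolution would then force $M(G) = 0$ and collapse $A$ to the (trivially non-negative but uninteresting) zero matrix. Verifying that the two-vertex gadget keeps $M(G) > 0$, so that the resulting pfaffian is actually positive and the diagram is faithfully encoded, is the one point that needs care; the non-crossing case is immediate, since there the diagram is already planar and no resolution is required.
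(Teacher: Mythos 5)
Your proposal takes essentially the same approach as the paper: realize the chord diagram as a planar graph with the $2N$ endpoints as cyclically ordered boundary vertices, resolve the crossings by inserting interior vertices so that the total vertex count stays even, and apply the Kuo--Ciucu theorem (Theorems 2.2--2.3) so that every sub-pfaffian is a non-negative count of perfect matchings. The only deviation is cosmetic --- you use a two-vertex gadget per crossing where the paper inserts one vertex per crossing plus one extra point on a chord when the number of crossings is odd --- and the verification you flag as delicate is not actually needed, since the theorem's conclusion requires only non-negativity of all sub-pfaffians, which the planar realization guarantees regardless of whether $M(G)>0$.
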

\begin{proof}
The proof is simple. One considers the following figures:
\begin{figure}[h]
	\centering
\includegraphics[width=0.5\textwidth]{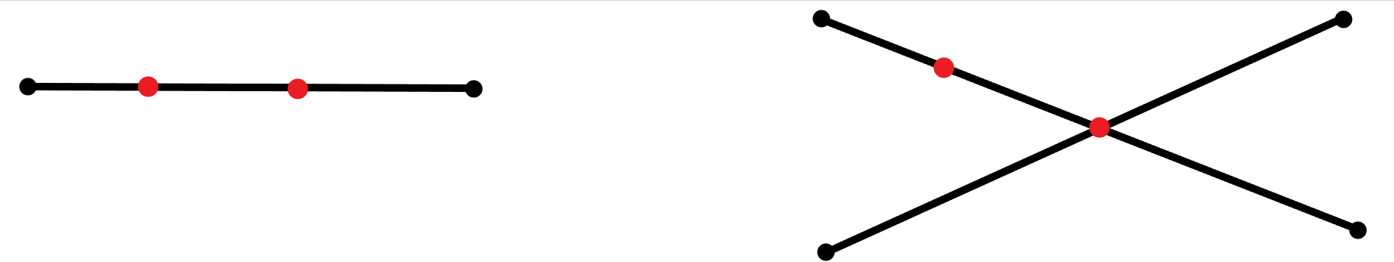}
	\end{figure}
If there is no intersection, we add two points to the lines. If there are  even intersection points, one will add one point to the each  intersection. If there are odd intersection points, one will add one extra point to a line such that the  interior points  are even number. Then the Kuo-Ciucu Theorem is applied. 
\end{proof}
More examples can be seen in the figure 4. \\
\begin{figure}
	\centering
\includegraphics[width=1\textwidth]{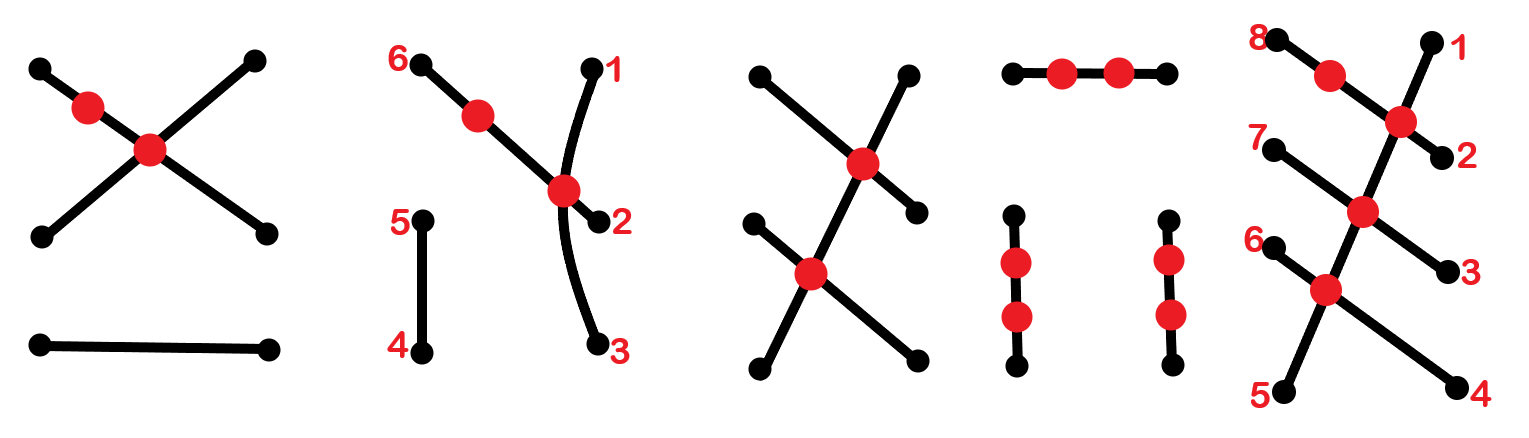}
\caption{The  interior points  are even number.} 
	\end{figure}
\textbf{Remarks:}\\
(1) The number of all chord diagrams of N chords is $ 1 \times 3 \times 5 \times 7 \cdots \times (2N-1) =\frac{(2N) !}{2^N (N!)}$. \\
(2) There are three types of chord diagrams: P-type (nesting), O-type (alignment) and T-type (crossing).They are used to describe elastic resonant solitons of water waves in the KP equation \cite{ck}. The number of P-type and  O-type chord diagrams (non-crossing) of N chords  is the Catalan number $C(N)=\frac{1}{N+1} {2N \choose N }$. \\
(3)Given a chord diagram, constructing a TNNP is non-unique. \\

\subsection{Lattice Path Enumeration }
A lattice path in 2-dimensional Euclidean space is a path in a lattice plane $\mathbb {Z}^2 $. Given two lattice points $ (2a, 0)$ and $ (2b, 0)$ , $a,b \in \mathbb{Z}$ and $ a \leq b $  , one considers the steps  (1, 1)  or  (1, -1)  that never pass below the x-axis in the lattice plane.  They are called the Dyck paths \cite{cr}. It is shown that the number of Dyck paths is the Catalan number $C(b-a)=\frac{1}{b-a+1} {2(b-a) \choose (b-a) }$, where one defines $C(0)=0$. Please see the figure 5. 
\begin{figure}[h]
	\centering
		\includegraphics[width=0.9\textwidth]{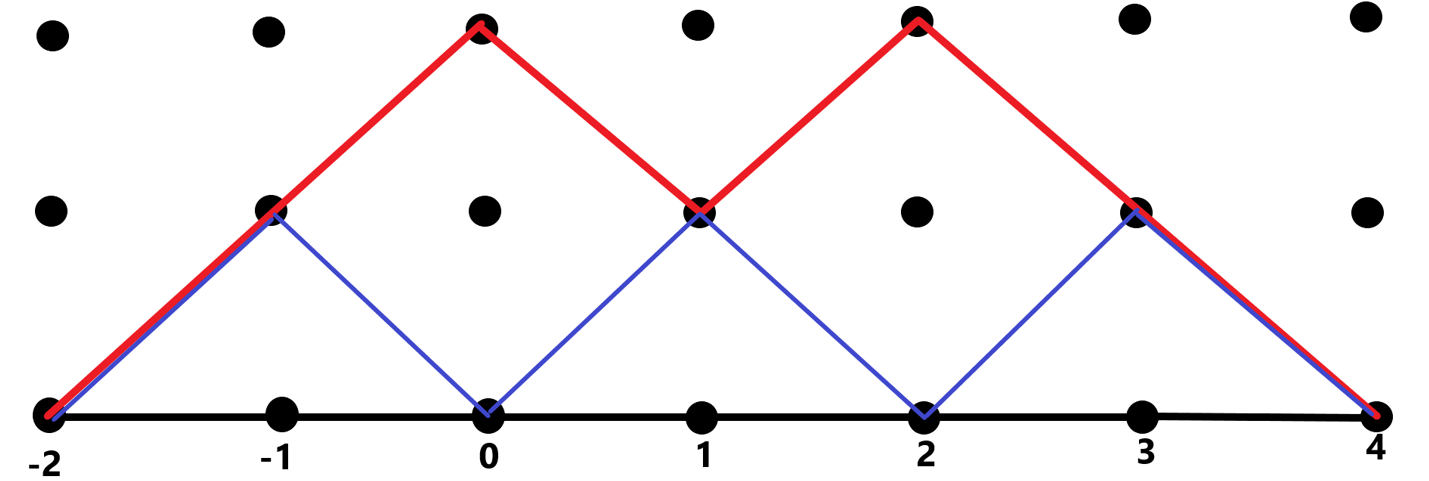}
\caption{Two Dyck paths are shown here with a=-1, b=2. There are $C(3)=5$ Dyck paths.} 	
\end{figure}

We have the following 
\begin{theorem}(\cite{st}, p.265)\\
Let $ a_1 \leq a_2 \leq \cdots \leq a_{2n} $ be integers and $C( a_1, a_2, a_3, \cdots, a_{2n}) $ denote the number of 
non-self intersecting Dyck paths $ ( L_1, L_2, L_3, \cdots, L_n) $ to connect the points 
$ (2a_1, 0), ( 2a_2, 0),  (2a_3, 0),  \cdots, (2a_{2n},0)$.  Then 
\be C( a_1, a_2, a_3, \cdots, a_{2n})  = Pf (C(a_j-a_i)).   \label{ca} \ee
\end{theorem}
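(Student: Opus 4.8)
The plan is to treat this as the Pfaffian analogue of the Lindstr\"om--Gessel--Viennot lemma and to prove it by a sign-reversing involution. First I would expand the right-hand side of (\ref{ca}) directly from the definition (\ref{pf1}). Writing $a_{ij}=C(a_j-a_i)$ for $i<j$, the Pfaffian becomes
\[ Pf\big(C(a_j-a_i)\big)=\sum_\sigma \epsilon(\sigma)\,\prod_{k=1}^{n} C\big(a_{\sigma_{2k}}-a_{\sigma_{2k-1}}\big), \]
the sum running over all matchings with $\sigma_{2k-1}<\sigma_{2k}$ and $\sigma_1<\sigma_3<\cdots<\sigma_{2n-1}$. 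Since each factor $C(a_{\sigma_{2k}}-a_{\sigma_{2k-1}})$ is by hypothesis the number of Dyck paths joining $(2a_{\sigma_{2k-1}},0)$ to $(2a_{\sigma_{2k}},0)$, the product enumerates all families $P$ of $n$ Dyck paths realizing the pairing $\sigma$, the individual paths now being allowed to intersect. Thus the Pfaffian equals the signed count $\sum_{(\sigma,P)}\epsilon(\sigma)$ over all pairs consisting of a pairing $\sigma$ together with a path family $P$ joining the paired points.

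Next I would cancel every configuration containing an intersection. Given a pair $(\sigma,P)$ in which two paths share a lattice point, scan left to right and select the first such meeting point (smallest $x$-coordinate, ties broken by smallest height); then exchange the two sub-paths leaving that point to the right. Because nothing to the left of the chosen point changes, it remains the first meeting point of the new family, so the rule is an involution. The exchange swaps the two right endpoints while fixing all left endpoints, hence reconnects the pairing by a transposition and reverses $\epsilon(\sigma)$, while the Catalan weight is unchanged because the same steps are merely rewired into two new valid Dyck paths. Therefore all intersecting configurations cancel in pairs, and only families of pairwise non-intersecting Dyck paths survive.

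It remains to evaluate the sign on the survivors. Here I would invoke the topological fact that two arcs in the closed upper half-plane whose four endpoints interleave on the axis, say at $2a_i<2a_k<2a_j<2a_l$, must cross; hence a non-intersecting family can only realize a non-crossing pairing of the ordered points $(2a_1,0),\dots,(2a_{2n},0)$. For a non-crossing pairing the associated permutation satisfies $\epsilon(\sigma)=(-1)^{\mathrm{cr}(\sigma)}=+1$, where $\mathrm{cr}(\sigma)$ is the number of crossing chords (a short induction on $n$, peeling off an innermost adjacent pair, verifies this). Consequently each surviving non-intersecting family contributes exactly $+1$, and since these families are precisely the objects counted by $C(a_1,\dots,a_{2n})$, summing gives $Pf(C(a_j-a_i))=C(a_1,a_2,\dots,a_{2n})$.

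The step I expect to be the main obstacle is making the involution genuinely well defined and fixed-point-free on intersecting configurations: at a lattice point where three or more strands coincide one must prescribe a deterministic, symmetric rule for which two strands to exchange so that performing the exchange twice restores the original family, and one must check that the exchange never produces a path dipping below the axis. The secondary point requiring care, absent from the determinantal version of the lemma, is reconciling the reconnection with the Pfaffian ordering convention $\sigma_1<\sigma_3<\cdots$, so that the transposition of right endpoints really does flip $\epsilon(\sigma)$ rather than being absorbed into a reordering of the pairs.
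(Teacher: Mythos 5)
Your strategy is the right one, but a preliminary remark is in order: the paper itself offers no proof of this statement --- Theorem 2.7 is quoted directly from \cite{st} --- so your argument can only be measured against the standard literature proof, which is exactly the route you chose: the Pfaffian analogue of the Lindstr\"om--Gessel--Viennot lemma (Stembridge's sign-reversing involution). In the case of \emph{strictly} increasing $a_1<a_2<\cdots<a_{2n}$ your outline is correct and completable: the expansion of $Pf(C(a_j-a_i))$ over matchings, the tail swap at the first common lattice point, and the evaluation $\epsilon(\sigma)=(-1)^{\mathrm{cr}(\sigma)}=+1$ on non-crossing matchings are all sound; the technical points you flag (a canonical choice when three or more paths meet, stability of that choice) are settled by the standard observation that a tail swap does not change the union of the two paths as a set of steps, so the set of meeting points, and a left-portion-based choice of which two paths to swap, are unchanged by the swap.

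The genuine gap is the case of ties, which the theorem explicitly allows ($a_1\le a_2\le\cdots\le a_{2n}$), which the paper actually uses (note its convention $C(0)=0$ and its example $C(-2,1,3,4,4,6)=0$), and at which your involution breaks. If two paired chords are in \emph{alignment} position $i_1<j_1<i_2<j_2$, the corresponding paths can meet only when $a_{j_1}=a_{i_2}$, i.e., path $1$ ends at the very lattice point where path $2$ begins. Your tail swap then produces the pairing $\{(i_1,j_2),(i_2,j_1)\}$ containing a zero-length path: under the convention $C(0)=0$ this is not a configuration of the signed sum at all, so the map fails to be an involution on that set; moreover the matching sign does \emph{not} flip here (alignment and nesting are both non-crossing, so both carry sign $+1$), so even adjusting the weight convention would not make the map sign-reversing. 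Concretely, for $a=(0,1,1,2)$ the two intersecting configurations --- single arches over $[0,2]$ and $[2,4]$ with pairing $\{(1,2),(3,4)\}$ (sign $+1$), and the same two arches with pairing $\{(1,3),(2,4)\}$ (sign $-1$) --- are both sent by your rule to the invalid pairing $\{(1,4),(2,3)\}$; they do cancel each other, but not via your map. The repair is to dispose of ties before running the involution: if $a_i=a_{i+1}$, then (using $C(0)=0$) rows $i$ and $i+1$ of the matrix $(C(a_j-a_i))$ coincide, hence $Pf(C(a_j-a_i))^2=\det(C(a_j-a_i))=0$ and the right-hand side vanishes, while on the left-hand side any family either pairs the two coincident points by an empty path (not counted) or contains two distinct paths through the same lattice point, so the count is also $0$. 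With ties excluded, the alignment-touching configuration cannot occur and your argument goes through as written.
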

So one has 
\begin{theorem}
The skew symmetric $ 2n \times 2n$ matrix $C(a_j-a_i)$ in (\ref {ca}) is a TNNP. 
\end{theorem}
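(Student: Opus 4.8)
The plan is to deduce total non-negativity directly from the enumeration identity (\ref{ca}), by showing that \emph{every} principal sub-pfaffian of $A := [C(a_j - a_i)]$ is itself the Pfaffian of a matrix of exactly the same combinatorial type, and hence counts a family of lattice paths. By definition $A$ is a TNNP once every $2m\times 2m$ (with $m\le n$) principal sub-pfaffian is non-negative, so it suffices to reinterpret each such sub-pfaffian as a count of combinatorial objects.

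First I would fix an arbitrary index set $I=\{i_1<i_2<\cdots<i_{2m}\}\subseteq\{1,2,\dots,2n\}$ and examine the corresponding principal submatrix $A_I$. For $r<s$ its $(r,s)$ entry is the $(i_r,i_s)$ entry of $A$, namely $C(a_{i_s}-a_{i_r})$. Putting $b_r:=a_{i_r}$, the monotonicity $a_1\le a_2\le\cdots\le a_{2n}$ combined with $i_1<i_2<\cdots<i_{2m}$ forces $b_1\le b_2\le\cdots\le b_{2m}$. Thus $A_I=[C(b_s-b_r)]$ is precisely the skew-symmetric matrix associated in (\ref{ca}) with the weakly increasing integer sequence $b_1,\dots,b_{2m}$, so the class of such matrices is closed under passing to principal submatrices indexed by increasing subsequences.

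The main step is then to apply the preceding theorem to the shorter sequence $b_1,\dots,b_{2m}$, which gives
\[
Pf(A_I)=Pf\bigl(C(b_s-b_r)\bigr)=C(b_1,b_2,\dots,b_{2m}),
\]
the number of non-self-intersecting Dyck path systems joining $(2b_1,0),\dots,(2b_{2m},0)$. Since this counts combinatorial objects it is a non-negative integer, so $Pf(A_I)\ge 0$. As $I$ was arbitrary, all principal sub-pfaffians of $A$ are non-negative, which is exactly the TNNP condition. The only point requiring care is the first one, namely that the hypotheses of (\ref{ca}) really persist under restriction to a subset of the indices; this is immediate, since a subsequence of a weakly increasing sequence is again weakly increasing and the convention $C(0)=0$ covers the case of repeated values $b_r=b_{r+1}$. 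No deeper obstacle arises, because total non-negativity here is a hereditary consequence of the path-counting (Lindstr\"om--Gessel--Viennot-type) interpretation of the Pfaffian.
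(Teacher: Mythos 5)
Your proposal is correct and takes essentially the same route as the paper: the paper's proof also observes that any principal submatrix indexed by $k_1\le k_2\le\cdots\le k_{2m}$ is again of the form covered by the enumeration identity (\ref{ca}), so its Pfaffian counts non-self-intersecting Dyck path systems and is therefore non-negative. Your write-up just makes explicit the hereditary step (a subsequence of a weakly increasing sequence is weakly increasing) that the paper leaves implicit.
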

\begin{proof}
Let $I=\{k_1, k_2, \cdots, k_{2m} \}  \subset  \{ a_1,  a_2 , \cdots , a_{2n} \} $  and $ k_1 \leq k_2   \leq  \cdots  \leq k_{2m} $.  Then $Pf (C(k_j-k_i))$ is the number 
of  non-self intersecting Dyck paths $ (L_1, L_2, L_3, \cdots, L_m ) $ to connect the points 
$ (2k_1, 0), ( 2k_2, 0),  (2k_3, 0),  \cdots, (2k_{2m},0)$. 
\end{proof} 
For example, 
\bean  C( -2, 1, 3, 4) &=& Pf  (\left  [\ba{llll} 0 & C(3) & C(5)  &C(6)  \\  -  C(3) & 0 &C(2)  &C(3) \\  - C(5)& -C(2 ) &0 & C(1)  \\  -  C(6) & -C(3) & -  C(1) &0 \ea  \right ]   )   \\
&=&  Pf  (\left  [\ba{llll} 0 & 5 & 42  & 132  \\  -  5 & 0 &2 & 5 \\  - 42 & -2  &0 & 1  \\  -  132  & -5 & -  1 &0 \ea  \right ]   )=59. 
\eean
So there are 59 sets of non-self intersecting Dyck paths $ (L_1^{(\alpha)}, L_2^{(\alpha) }, L_3^{ (\alpha)}, L_4^{ (\alpha) }), \alpha=1,2,3, \cdots, 59, $ to connect the points $ (-4, 0), ( 2, 0),  (6, 0),   (8,0)$. Its soliton graph also refers to the figure 1 in \cite{jh2}.  \\

The merit of the Dyck-path  representation (\ref{ca}) is that one can easily extend a $2n \times 2n $ TNNP to a 
$2(n+1)  \times 2(n+1)  $ TNNP by adding two more points $ \{ a_{2n+1}, a_{2n+2}\}, a_{2n+1} \leq  a_{2n+2}$. For instance, we can extend $ C( -2, 1, 3, 4) $ to the singular TNNP
\bean 
C( -2, 1, 3, 4, 4, 6) &=& Pf  (\left  [\ba{llllll} 0 & C(3) & C(5)  &C(6) &C(6) &C(8) \\  -  C(3) & 0 &C(2)  &C(3) & C(3) &C(5 ) \\  - C(5)& -C(2 ) &0 & C(1) & C(1) & C(3 ) \\  -  C(6) & -C(3) & -  C(1) &0 &C(0)  &C(2) \\ 
-  C(6) & -C(3) & -  C(1) &-C(0 ) &0  &C(2) \\ -  C(8) & -C(5) & -  C(3) & -  C(2) &-C(2)  &0 \ea  \right ]   )   \\
&=&  Pf  (\left    [\ba{llllll} 0 & 5 & 42  &132 &132 &1430 \\  - 5 & 0 &2 &5 & 5 &42 \\  - 42& -2  &0 & 1 & 1 & 5 \\  -  132 & -5 & -  1 &0 &0  &2 \\ 
- 42 & -5 & -  1 &0 &0  &2\\ - 1430 & -42 & - 5 & - 2 &-2  &0 
\ea  \right ]   )=0. 
\eean 
Furthermore,  one obtains the  sub-pfaffian 
\bean  C( -2, 1, 4, 6) &=& Pf  (\left  [\ba{llll} 0 & C(3) & C(6)  &C(8)  \\  -  C(3) & 0 &C(3)  &C(5) \\  - C(6)& -C(3 ) &0 & C(2)  \\  -  C(8) & -C(5) & -  C(2) &0 \ea  \right ]   )   \\
&=&  Pf  (\left  [\ba{llll} 0 & 5 & 42  & 1430  \\  -  5 & 0 &5 & 42\\  - 42 & -5  &0 & 2  \\  -  1430  & -42 & -  2&0 \ea  \right ]   )=5396. 
\eean
Similarly,  there are 5396 sets of non-self intersecting Dyck paths $ (L_1^{(\alpha)}, L_2^{(\alpha) }, L_3^{ (\alpha)}, L_4^{ (\alpha) }), \alpha=1,2,3, \cdots, 5396, $ to connect the points $ (-4, 0), ( 2, 0),  (8, 0),   (12,0)$. 

In Theorem 2.8, one can consider the strict case. 
\begin{theorem}
Let $ a_1 <  a_2 < \cdots <  a_{2n} $ be integers. Then  $C(a_j-a_i)$  is a totally positive pfaffian.
\end{theorem}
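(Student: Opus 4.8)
The plan is to upgrade the total non-negativity already established in Theorem 2.9 to strict positivity, using the stronger strict hypothesis $a_1 < a_2 < \cdots < a_{2n}$. By the definition of a totally positive pfaffian, it suffices to prove that for every index set $I = \{k_1, k_2, \ldots, k_{2m}\} \subseteq \{a_1, \ldots, a_{2n}\}$ with $k_1 < k_2 < \cdots < k_{2m}$ one has $Pf(C(k_j - k_i)) > 0$, rather than merely $\geq 0$.

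First I would invoke the combinatorial interpretation already supplied in the proof of Theorem 2.9: the sub-pfaffian $Pf(C(k_j - k_i))$ equals the number of systems of non-self-intersecting Dyck paths $(L_1, L_2, \ldots, L_m)$ connecting the points $(2k_1, 0), (2k_2, 0), \ldots, (2k_{2m}, 0)$. Since this quantity is a genuine count, with all non-intersecting configurations contributing the same (positive) sign in the pfaffian expansion, as guaranteed by Stembridge's Theorem 2.8, it is automatically non-negative. Hence to prove strict positivity it is enough to exhibit a single admissible configuration.

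Next I would construct such a configuration explicitly. I pair the points consecutively as $(k_1, k_2), (k_3, k_4), \ldots, (k_{2m-1}, k_{2m})$. The strict inequalities force $2k_{2i} < 2k_{2i+1}$ for each $i$, so the closed intervals $[2k_{2i-1}, 2k_{2i}]$ are pairwise disjoint on the $x$-axis. Over each such interval the difference $k_{2i} - k_{2i-1} \geq 1$, so $C(k_{2i} - k_{2i-1}) \geq C(1) = 1$ and at least one Dyck path exists; since a Dyck path over $[2a,2b]$ has its $x$-coordinate confined to that interval, choosing one arc over each disjoint interval yields a mutually non-intersecting system. This produces at least one admissible configuration, whence $Pf(C(k_j - k_i)) \geq 1 > 0$.

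Finally, since $I$ was arbitrary, every $2m \times 2m$ principal sub-pfaffian of $C(a_j - a_i)$ is strictly positive, establishing total positivity. The only delicate point I anticipate is verifying that the constructed arcs are genuinely non-intersecting and that the pfaffian really counts, rather than merely bounds, these configurations; both rest on the disjointness of the intervals and on the sign-coherence already exploited in Theorem 2.9, so no new machinery is required. I would also remark that the degenerate convention $C(0) = 0$ never intervenes here, because the strict ordering guarantees that every off-diagonal argument $k_j - k_i$ with $i < j$ is at least $1$.
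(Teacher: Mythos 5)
Your proposal is correct and follows essentially the same route as the paper: invoke the Dyck-path counting interpretation of the sub-pfaffian and then argue the count is strictly positive. The only difference is that where the paper simply asserts positivity ``due to nestings or alignments,'' you make this rigorous by explicitly exhibiting the alignment configuration of consecutive pairs $(k_1,k_2),\ldots,(k_{2m-1},k_{2m})$ over pairwise disjoint intervals, which is a welcome sharpening of the same idea.
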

\begin{proof}
Let $I=\{k_1, k_2, \cdots, k_{2m} \}  \subset  \{ a_1,  a_2 , \cdots , a_{2n} \} $ and  $ k_1 <  k_2 < \cdots <  k_{2m}. $Then $Pf (C(k_j-k_i))$ is the number of  non-self intersecting Dyck paths $ (L_1, L_2, L_3, \cdots, L_m ) $ to connect the points $ (2k_1, 0), ( 2k_2, 0),  (2k_3, 0),  \cdots, (2k_{2m},0)$. Since $ k_1 <  k_2 < \cdots <  k_{2m} $, the number is non-zero  due to nestings or alignments.  Then each sub-pfaffian is positive. This completes the proof.
\end{proof} 
\section{ Reduction to the Tridiagonal Form}
In this section, one considers the  reduction of a skew-symmetric matrix to the tridiagonal form using the Gauss transformation \cite{bu, wi}. From this reduction, one could construct a TNNP and its extension. 

Let A be a $ 2n \times 2n $ skew-symmetric matrix and we have the factorization 
\be  A= LDL^T , \label{fa} \ee 
where L is a unit  lower-triangular matrix with first column as $\vec 0$ and D is a skew-symmetric tridiagonal matrix, that is, 
\bean 
L &=& \left  [\ba{lllllll} 1  & 0 & 0  &0 &0  &\cdots  &0  \\  0 & 1 &0   &0 & 0& \cdots  &0 \\  0& l_{32}  &1  & 0  & 0& \cdots & 0  \\  0 & l_{42}   & l_{43} &1 & 0  &\cdots  &0 \\ 
0 & \vdots   &\vdots   & \vdots & \vdots   &\vdots \\ 0 & l_{2n-1, 2}   &  l_{2n-1, 3} &\cdots & l_{2n-1, 2n-2} &1 &0 \\
0 &   l_{2n, 2}   &  l_{2n, 3} & l_{2n, 4}& \cdots  & l_{2n, 2n-1}&1   \ea  \right ]     \\
\eean 
and 
\bean 
 D &=& \left  [\ba{lllllll} 0 & d_{12} & 0  &0  &0  &\cdots &  0  \\  - d_{12}  & 0 & d_{23}   &0 & 0  &\cdots &0  \\  
0 &  - d_{23} & 0 & d_{34} & 0 & \cdots&0\\  0  & 0 & - d_{34}  &0 &d_{45} &\cdots &0 \\ 
  0 & \vdots   &\vdots   & \vdots & \vdots   &\vdots & \vdots \\ 0 & 0 & 0  &0 &-d_{2n-2,2n-1} &   0  & d_{2n-1,2n}\\
0 &0  & 0  & \cdots &0  &- d_{2n-1,2n} &0  \ea  \right ] .      \\
\eean 
Also, one has 
\[ Pf(A)=d_{12} d_{34} d_{56} \cdots d_{2n-1,2n}, \]
and  the expansion (Lemma 2.1 in \cite{is}) 
\bea
a_{ij} & = & \sum_{1 \leq \alpha < \beta \leq 2n} d_{ \alpha  \beta } \quad det \left  [\ba{ll}  l_{i \alpha}   & l_{i \beta}  \\   l_{j \alpha} &  l_{j \beta}  \ea  \right ]  \no \\
&=&  \sum_{k=1 }^n  d_{ 2k-1, 2k } det \left  [\ba{ll}  l_{i , 2k-1}   & l_{i , 2k }  \\   l_{j , 2k-1} &  l_{j , 2k }  \ea  \right ] +   \sum_{k=1 }^{n-1}   d_{ 2k, 2k+1 } det \left  [\ba{ll}  l_{i , 2k}   & l_{i , 2k+1 }  \\   l_{j , 2k} &  l_{j , 2k+1  }  \ea  \right ] \no \\
&=& \sum_{k=1 }^{n-1} \left (  d_{ 2k-1, 2k } det \left  [\ba{ll}  l_{i , 2k-1}   & l_{i , 2k }  \\   l_{j , 2k-1} &  l_{j , 2k }  \ea  \right ] +d_{ 2k, 2k+1 } det \left  [\ba{ll}  l_{i , 2k}   & l_{i , 2k+1 }  \\   l_{j , 2k} &  l_{j , 2k+1  }  \ea  \right ]\right)+ d_{ 2n-1, 2n }. \no \\             \label{ex}
\eea  

For a skew-symmetric matrix A, we denote $\om_I$  as  the sub-pfaffian $Pf(A_I)$, where $ I=\{a_{k_1}, a_{k_2}, \cdots, a_{k_{2m}} \}  \subset  \{ a_1,  a_2 , \cdots , a_{2n} \} $ and  $ k_1 <  k_2 < \cdots <  k_{2m}$. We notice that  $\om_{ij}=a_{ij}$. One has the following pfaffian identity (p.92 in \cite{hi})
\be  \om_I   \om_{I pqrs}=  \om_{Ipq}   \om_{Irs} - \om_{Ipr}   \om_{Iqs}    +   \om_{Ips}   \om_{Iqr}.\label{id}  \ee
The following theorem gives the so-called Pfaﬃan factorization of a skew-symmetric matrix A using the sub-pfaffians.
\begin{theorem}
In the factorization (\ref{fa}), $n \geq 2$, one has,   
\bea
l_{i,2k} &=& \frac{ \om_{\widehat{1, 2k-1},i}}  { \om_{\widehat{1, 2k}}},  \quad   l_{i, 2k-1}=
\frac{ \om_{\widehat{2, 2k-2},i}}  { \om_{\widehat{2, 2k-1}}} ,  \label{cp} \\
d_{2k-1, 2k} &=& \frac{ \om_{\widehat{1, 2k}}}   { \om_{\widehat{1, 2k-2}}},  \quad   d_{2k, 2k+1}=
\frac{ \om_{\widehat{2, 2k+1}}}  { \om_{\widehat{2, 2k-1}}},  \label{co}
\eea
where  $k =1,  2, 3, 4 , \cdots, n $ and $\widehat{a , b}= \{a, a+1, a+2, \cdots , b \}$. Here one defines 
\[ \om_{\widehat{2, 0}} =0 ,  \quad  \om_{\widehat{1, 0}}=\om_{\widehat{2, 1}}=1. \] 
\end{theorem}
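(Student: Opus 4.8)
The plan is to read every entry appearing in (\ref{cp})--(\ref{co}) off a single uniform source: the minor-summation (Cauchy--Binet-type) formula for Pfaffians, of which the expansion (\ref{ex}) is precisely the case of a two-element row set. Writing $A=LDL^T$ and selecting a principal submatrix $A_T=A[T]$ with $|T|=2m$, one has $A_T=L[T,:]\,D\,L[T,:]^T$, and the minor-summation formula (the general form of Lemma~2.1 in \cite{is}) gives
\[ Pf(A_T)=\sum_{\substack{W\subseteq\{1,\dots,2n\}\\ |W|=2m}}\det\big(L[T,W]\big)\,Pf\big(D[W]\big). \]
Because $D$ is tridiagonal, $Pf(D[W])\neq 0$ forces $W$ to be a disjoint union of nearest-neighbour bonds $\{s,s+1\}$, and then $Pf(D[W])=\pm\prod d_{s,s+1}$; because $L$ is unit lower-triangular with vanishing first subcolumn, the factor $\det(L[T,W])$ is nonzero for only very few $W$. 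The proof then consists in choosing $T$ to be exactly the index sets occurring in the statement and showing that in each case a single $W$ survives, whose contribution is the asserted ratio.

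I would first dispose of the diagonal of $D$. Since $L$ is lower-triangular, the factorization restricts to leading principal submatrices, $A[\{1,\dots,2k\}]=L[\{1,\dots,2k\}]\,D[\{1,\dots,2k\}]\,L[\{1,\dots,2k\}]^T$, and $Pf(BMB^T)=\det(B)\,Pf(M)$ together with $\det L[\{1,\dots,2k\}]=1$ gives $\om_{\widehat{1,2k}}=Pf(D[\{1,\dots,2k\}])=d_{12}d_{34}\cdots d_{2k-1,2k}$, the Pfaffian of a tridiagonal block being the product of its odd superdiagonal entries. Taking the ratio of consecutive such products yields the first relation in (\ref{co}). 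The even pivots $d_{2k,2k+1}$ require the windowed submatrix on $\{2,\dots,2k+1\}$; applying the minor summation there, the vanishing first subcolumn of $L$ kills every $W$ containing the index $1$, so the only admissible $W$ is the bond-tiling of the window itself, giving $\om_{\widehat{2,2k-1}}=d_{23}d_{45}\cdots d_{2k-2,2k-1}$ and hence the second relation in (\ref{co}).

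For the off-diagonal entries of $L$ I would apply the formula to $T=\{1,\dots,2k-1,i\}$ (even columns) and to $T=\{2,\dots,2k-2,i\}$ (odd columns), $i$ being the extra index. The key combinatorial step is that exactly one $W$ contributes. Two competing constraints pin it down: $Pf(D[W])\ne0$ forces $W$ to be a union of bonds, while $\det(L[T,W])\ne0$ forces, via lower-triangularity, that $W$ contains at most one index beyond the support of the non-final rows (a column past that support is nonzero only in the last row $i$, so two such columns would be linearly dependent), and, in the odd-column case, via the zero first subcolumn, that $W$ avoids column $1$. Matching these against the unique perfect matching of a path forces $W=\{1,\dots,2k\}$ (resp.\ $W=\{2,\dots,2k-1\}$), and the surviving minor evaluates, by a one-line cofactor expansion along its last column, to $l_{i,2k}\cdot 1$ (resp.\ $l_{i,2k-1}\cdot 1$). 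Dividing by $\om_{\widehat{1,2k}}$ (resp.\ $\om_{\widehat{2,2k-1}}$) produces (\ref{cp}).

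The hardest part is the combinatorial bookkeeping for the windowed, index-$1$-removed cases feeding the odd columns $l_{i,2k-1}$ and the even pivots $d_{2k,2k+1}$: there one must combine the bond structure of $Pf(D[W])$ with the exclusion of column $1$ coming from the zero first subcolumn of $L$, and verify that no spurious $W$ survives (the signs produced by the formula are harmless since only one term remains). I would also check the boundary conventions $\om_{\widehat{2,0}}=0$ and $\om_{\widehat{1,0}}=\om_{\widehat{2,1}}=1$ against the case $k=1$, where the formulas must reproduce $d_{12}=a_{12}$, $l_{i,2}=a_{1i}/a_{12}$, and the vanishing first subcolumn $l_{i,1}=0$; and note that the existence of the factorization (\ref{fa}), hence the theorem, presupposes the nondegeneracy $\om_{\widehat{1,2k}}\ne0$, $\om_{\widehat{2,2k-1}}\ne0$. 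An alternative route, closer to the tools displayed just before the statement, is to verify the formulas by induction on $k$, using the Pfaffian identity (\ref{id}) to telescope the expansion (\ref{ex}); I would keep this in reserve as a cross-check but expect the minor-summation argument to be the cleaner path.
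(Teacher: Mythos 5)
Your proposal is correct, but it takes a genuinely different route from the paper. The paper proves the theorem by induction on $n$: the base case $n=2$ is a direct calculation, and the inductive step specializes the expansion (\ref{ex}) with the induction-hypothesis formulas to obtain the identity (\ref{rs}) for a general entry $a_{rs}$, then compares it with the linear equations that the unknown quantities $l_{s,2i+1}$, $l_{s,2i+2}$, $d_{2n-1,2n}$ must satisfy via (\ref{ex}), resolving the comparison with the three-term Pfaffian identity (\ref{id}) (computation (\ref{faa})). You instead apply the full Ishikawa--Wakayama minor summation formula $Pf(A_T)=\sum_{W}\det\bigl(L[T,W]\bigr)\,Pf\bigl(D[W]\bigr)$ --- which is exactly the formula (\ref{su}) that the paper itself invokes only later, in the proof of Theorem 3.2 --- and show that for $T=\widehat{1,2k}$, $T=\widehat{2,2k+1}$, $T=\widehat{1,2k-1}\cup\{i\}$, $T=\widehat{2,2k-2}\cup\{i\}$ the tridiagonality of $D$ (only disjoint unions of bonds survive), the lower-triangularity of $L$, and its vanishing first subcolumn leave exactly one admissible $W$, whose contribution is the asserted product or ratio; your uniqueness argument (at most one column index can exceed the support of the non-final rows, and the bond tiling of an integer interval is unique) and your cofactor evaluations $\det(L[T,W])=l_{i,2k}$, resp.\ $l_{i,2k-1}$, are sound. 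What your route buys: no induction, a uniform derivation of all four formulas in (\ref{cp})--(\ref{co}), and explicit attention to the nondegeneracy hypotheses $\om_{\widehat{1,2k}}\neq 0$, $\om_{\widehat{2,2k-1}}\neq 0$ and the boundary conventions, which the paper leaves implicit. What the paper's route buys: it relies only on the rank-two special case (\ref{ex}) of the minor summation formula plus the classical identity (\ref{id}), so it stays within a more elementary toolkit, at the price of heavier coefficient comparison and an induction.
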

For instance, n=6, we have 
\bean 
L &=& \left  [\ba{llllll} 1  & 0 & 0  &0 &0  &0  \\  0 & 1 &0   &0 & 0  &0 \\  0 & \frac{ \om_{13} }  { \om_{12}}   &1 &  0 &  0 & 0  \\  0 &  \frac{ \om_{1 4} }  { \om_{12}} & \frac{ \om_{24} }  { \om_{23}}    &1 & 0  &0 \\ 
0 & \frac{ \om_{15} }  { \om_{12}}    & \frac{ \om_{25} }  { \om_{23}} &   \frac{ \om_{1235} }  { \om_{1234}}              & 1  &0 \\    0 &   \frac{ \om_{16} }  { \om_{12}}  &  \frac{ \om_{26} }  { \om_{23}} &\frac{ \om_{1236} }  { \om_{1234}}  &  \frac{ \om_{2346} }  { \om_{2345}} &1   \ea  \right ]  =  \left  [\ba{llllll} 1  & 0 & 0  &0 &0  &0  \\  0 & 1 &0   &0 & 0  &0 \\  0 & \frac{ \om_{13} }  { \om_{12}}   &1 &  0 &  0 & 0  \\  0 &  \frac{ \om_{1 4} }  { \om_{12}} & \frac{ \om_{24} }  { \om_{23}}    &1 & 0  &0 \\ 
0 & \frac{ \om_{15} }  { \om_{12}}    & \frac{ \om_{25} }  { \om_{23}} &    \frac{  \om_{\widehat{1, 3},5}} {
 \om_{\widehat{1, 4}} } & 1  &0 \\    0 &   \frac{ \om_{16} }  { \om_{12}}  &  \frac{ \om_{26} }  { \om_{23}} &\frac{  \om_{\widehat{1, 3},6}}  {  \om_{\widehat{1, 4}}}  &  \frac{  \om_{\widehat{2, 4},6} }  {  \om_{\widehat{2, 5}}} &1   \ea  \right ]  
\eean 
and 
\bean 
 D &=& \left  [\ba{llllll} 0 & \om_{12}  & 0  &0  &0 &  0  \\  - \om_{12} & 0 & \om_{23}   &0 & 0 &0  \\  0 &  - \om_{23} & 0 & \frac{ \om_{1234} }  { \om_{12}} & 0 & 0 \\  0  & 0 & -  \frac{ \om_{1234} }  { \om_{12}}&0 &   \frac{ \om_{2345} }  { \om_{23}} &0 \\  0 & 0  &0 &-  \frac{ \om_{2345} }  { \om_{23}}     &   0  & 
 \frac{ \om_{123456} }  { \om_{1234}}  \\ 0 &0  & 0  &0  &-  \frac{ \om_{123456} }  { \om_{1234}} &0  \ea  \right ] \\
&=&   \left  [\ba{llllll} 0 & \om_{12}  & 0  &0  &0 &  0  \\  - \om_{12} & 0 & \om_{23}   &0 & 0 &0  \\  0 &  - \om_{23} & 0 & \frac{  \om_{\widehat{1, 4}}  }  { \om_{12}} & 0 & 0 \\  0  & 0 & -  \frac{  \om_{\widehat{1, 4}}}  { \om_{12}}&0 &   \frac{  \om_{\widehat{2, 5}} }  { \om_{23}} &0 \\  0 & 0  &0 &-  \frac{  \om_{\widehat{2, 5}} }  { \om_{23}}     &   0  & 
 \frac{\om_{\widehat{1,6}  }}  {  \om_{\widehat{1, 4}}}  \\ 0 &0  & 0  &0  &-  \frac{  \om_{\widehat{1,6}}  }{  \om_{\widehat{1, 4}}} &0  \ea  \right ]    .      
\eean 
\begin{proof}
We use induction on n. When $n=2$, it is a direct calculation. Assume our claim holds when the matrix size of $A$ is $n-1$. Then by (\ref{ex}), (\ref{cp}) and (\ref{co}) , one has  
\bea
a_{2i, 2i+1} & = & \sum_{k=1 }^{i} \frac{\om_{\widehat{1, 2k}}}{\om_{\widehat{1, 2k-2}}}    \quad det \left  [\ba{ll}    \frac{\om_{\widehat{2, 2k-2}, 2i}}{\om_{\widehat{2, 2k-1}}}  & \frac{\om_{\widehat{1, 2k-1},2i}}{\om_{\widehat{1, 2k}}}  \\    \frac{\om_{\widehat{2, 2k-2}, 2i+1}}{\om_{\widehat{2, 2k-1}}}                                                &                                 \frac{\om_{\widehat{1, 2k-1}, 2i+1}}{\om_{\widehat{1, 2k}}}  \ea  \right ]  \no \\
&+& \sum_{k=1 }^{i-1} \frac{\om_{\widehat{2, 2k+1}}}{\om_{\widehat{2, 2k-1}}}    \quad det \left  [\ba{ll}   
 \frac{\om_{\widehat{1, 2k-1}, 2i}}{\om_{\widehat{1, 2k}}}  & \frac{\om_{\widehat{2, 2k},2i}}{\om_{\widehat{2, 2k+1}}}  \\    \frac{\om_{\widehat{1, 2k-1}, 2i+1}}{\om_{\widehat{1, 2k}}}                                                &                                 \frac{\om_{\widehat{2, 2k}, 2i+1}}{\om_{\widehat{2, 2k+1}}}  \ea  \right ]   
+ \frac{\om_{\widehat{2, 2i+1}}}{\om_{\widehat{2, 2i-1}}}.   \label{id}
\eea  

Replacing the 2i-th  row/column by the r-th  row/column and the (2i+1)-st  row/column by the s-th row/column in this identity (\ref{id}), we see
\bea
a_{r, s} & = & \sum_{k=1 }^{i} \frac{\om_{\widehat{1, 2k}}}{\om_{\widehat{1, 2k-2}}}    \quad det \left  [\ba{ll}    \frac{\om_{\widehat{2, 2k-2}, r}}{\om_{\widehat{2, 2k-1}}}  & \frac{\om_{\widehat{1, 2k-1},r}}{\om_{\widehat{1, 2k}}}  \\    \frac{\om_{\widehat{2, 2k-2}, s}}{\om_{\widehat{2, 2k-1}}}                                                &                                 \frac{\om_{\widehat{1, 2k-1}, s}}{\om_{\widehat{1, 2k}}}  \ea  \right ]  \no \\
&+& \sum_{k=1 }^{i-1} \frac{\om_{\widehat{2, 2k+1}}}{\om_{\widehat{2, 2k-1}}}    \quad det \left  [\ba{ll}   
 \frac{\om_{\widehat{1, 2k-1}, r}}{\om_{\widehat{1, 2k}}}  & \frac{\om_{\widehat{2, 2k},r}}{\om_{\widehat{2, 2k+1}}}  \\    \frac{\om_{\widehat{1, 2k-1}, s}}{\om_{\widehat{1, 2k}}}                                                &                                 \frac{\om_{\widehat{2, 2k}, s}}{\om_{\widehat{2, 2k+1}}}  \ea  \right ]   
+ \frac{\om_{\widehat{2, 2i-1},r,s }}{\om_{\widehat{2, 2i-1}}}   \label{rs}
\eea
for  any $ r, s$. 

On the other hand, from (\ref{ex}), it can be seen that $l_{s, r}$ , $ 2 \leq i <n, s=2n-1, 2n, r=2i$, must satisfy the linear equations
\bea 
a_{2i, s} &=&  \sum_{k=1 }^{i} \frac{\om_{\widehat{1, 2k}}}{\om_{\widehat{1, 2k-2}}}    \quad det \left  [\ba{ll}    \frac{\om_{\widehat{2, 2k-2}, 2i}}{\om_{\widehat{2, 2k-1}}}  & \frac{\om_{\widehat{1, 2k-1},2i}}{\om_{\widehat{1, 2k}}}  \\    \frac{\om_{\widehat{2, 2k-2}, s}}{\om_{\widehat{2, 2k-1}}} & \frac{\om_{\widehat{1, 2k-1}, s}}{\om_{\widehat{1, 2k}}}  \ea  \right ]\\
&+& \sum_{k=1 }^{i-1} \frac{\om_{\widehat{2, 2k+1}}}{\om_{\widehat{2, 2k-1}}}    \quad det \left  [\ba{ll}   
 \frac{\om_{\widehat{1, 2k-1}, 2i}}{\om_{\widehat{1, 2k}}}  & \frac{\om_{\widehat{2, 2k},2i}}{\om_{\widehat{2, 2k+1}}}  \\    \frac{\om_{\widehat{1, 2k-1}, s}}{\om_{\widehat{1, 2k}}} & \frac{\om_{\widehat{2, 2k}, s}}{\om_{\widehat{2, 2k+1}}}  \ea  \right ] +\frac{\om_{\widehat{2, 2i+1}}}{\om_{\widehat{2, 2i-1}}}l_{s, 2i+1}    \no 
\eea
Comparing this equation with (\ref{rs}), it's seen that $l_{s, 2i+1}=\frac{ \om_{\widehat{2, 2i},s}}  { \om_{\widehat{2, 2i+1}}}$.  

As for $r=2i+1$, $ 2 \leq i <n, s=2n-1, 2n$, one has from (\ref{ex})
\bea
a_{2i+1, s} & = & \sum_{k=1 }^{i} \frac{\om_{\widehat{1, 2k}}}{\om_{\widehat{1, 2k-2}}}    \quad det \left  [\ba{ll}    \frac{\om_{\widehat{2, 2k-2}, 2i+1}}{\om_{\widehat{2, 2k-1}}}  & \frac{\om_{\widehat{1, 2k-1},2i+1}}{\om_{\widehat{1, 2k}}}  \\    \frac{\om_{\widehat{2, 2k-2}, s}}{\om_{\widehat{2, 2k-1}}}                                                &                                 \frac{\om_{\widehat{1, 2k-1}, s}}{\om_{\widehat{1, 2k}}}  \ea  \right ] 
 + \frac{\om_{\widehat{1, 2i+2}}}{\om_{\widehat{1, 2i}}}l_{s, 2i+2}\no \\
&+& \sum_{k=1 }^{i-1} \frac{\om_{\widehat{2, 2k+1}}}{\om_{\widehat{2, 2k-1}}}    \quad det \left  [\ba{ll}   
 \frac{\om_{\widehat{1, 2k-1}, 2i+1}}{\om_{\widehat{1, 2k}}}  & \frac{\om_{\widehat{2, 2k},2i+1}}{\om_{\widehat{2, 2k+1}}}  \\    \frac{\om_{\widehat{1, 2k-1}, s}}{\om_{\widehat{1, 2k}}}                                                &                                 \frac{\om_{\widehat{2, 2k}, s}}{\om_{\widehat{2, 2k+1}}}  \ea  \right ] + \frac{\om_{\widehat{2, 2i+1}}}{\om_{\widehat{2, 2i-1}}} \quad det \left  [\ba{ll}   
 \frac{\om_{\widehat{1, 2i-1}, 2i+1}}{\om_{\widehat{1, 2i}}}  & 1 \\    \frac{\om_{\widehat{1, 2i-1}, s}}{\om_{\widehat{1, 2i}}}   &   \frac{\om_{\widehat{2, 2i}, s}}{\om_{\widehat{2, 2i+1}}}  \ea  \right ] \no  \\  \label{ec}
\eea
Comparing this equation with (\ref{rs}), we obtain that 
\be
\frac{\om_{\widehat{1, 2i+2}}}{\om_{\widehat{1, 2i}}}l_{s, 2i+2}+  \frac{\om_{\widehat{2, 2i+1}}}{\om_{\widehat{2, 2i-1}}}det \left  [\ba{ll}   \frac{\om_{\widehat{1, 2i-1}, 2i+1}}{\om_{\widehat{1, 2i}}}  & 1 \\    \frac{\om_{\widehat{1, 2i-1}, s}}{\om_{\widehat{1, 2i}}}   &   \frac{\om_{\widehat{2, 2i}, s}}{\om_{\widehat{2, 2i+1}}}  \ea  \right ]=
\frac{\om_{\widehat{2, 2i-1},2i+1,s }}{\om_{\widehat{2, 2i-1}}}.  \label{so} 
\ee
Using the identity (\ref{id}), one yields
\bea 
&&\om_{\widehat{2, 2i-1}}\om_{\widehat{1,2i+1}, s}= \om_{\widehat{2, 2i-1}}\om_{\widehat{2, 2i-1}, 2i, 2i+1, s, 1} \no  \\ 
&=&\om_{\widehat{2, 2i-1}, 2i, 2i+1}\om_{\widehat{2, 2i-1}, s, 1} -\om_{\widehat{2, 2i-1}, 2i, s}\om_{\widehat{2, 2i-1}, 2i+1, 1}+  \om_{\widehat{2, 2i-1}, 2i, 1}\om_{\widehat{2, 2i-1}, 2i+1, s} \no   \\
&=& \om_{\widehat{2, 2i+1}}\om_{\widehat{1, 2i-1}, s} -\om_{\widehat{2, 2i}, s}\om_{\widehat{1, 2i-1}, 2i+1}+  \om_{\widehat{1, 2i}}\om_{\widehat{2, 2i-1}, 2i+1, s}.  \label{faa}
\eea 
Plugging it into (\ref{so}), we have $l_{s, 2i+2}=\frac{ \om_{\widehat{1, 2i+1},s}}  { \om_{\widehat{1, 2i+2}}}$ after a simple calculation.

Finally,  letting $i=n-1,  s=2n$  in  (\ref{ec}),  one  has 
\bea
a_{2n-1, 2n} & = & \sum_{k=1 }^{n-1} \frac{\om_{\widehat{1, 2k}}}{\om_{\widehat{1, 2k-2}}}    \quad det \left  [\ba{ll}    \frac{\om_{\widehat{2, 2k-2}, 2n-1}}{\om_{\widehat{2, 2k-1}}}  & \frac{\om_{\widehat{1, 2k-1},2n-1}}{\om_{\widehat{1, 2k}}}  \\    \frac{\om_{\widehat{2, 2k-2}, 2n}}{\om_{\widehat{2, 2k-1}}}                                                &                                 \frac{\om_{\widehat{1, 2k-1}, 2n}}{\om_{\widehat{1, 2k}}}  \ea  \right ] 
 + d_{2n, 2n-1}  \no \\
&+& \sum_{k=1 }^{n-2} \frac{\om_{\widehat{2, 2k+1}}}{\om_{\widehat{2, 2k-1}}}    \quad det \left  [\ba{ll}   
 \frac{\om_{\widehat{1, 2k-1}, 2n-1}}{\om_{\widehat{1, 2k}}}  & \frac{\om_{\widehat{2, 2k},2n-1}}{\om_{\widehat{2, 2k+1}}}  \\    \frac{\om_{\widehat{1, 2k-1}, 2n}}{\om_{\widehat{1, 2k}}}                                                &                                 \frac{\om_{\widehat{2, 2k}, 2n}}{\om_{\widehat{2, 2k+1}}}  \ea  \right ] + \frac{\om_{\widehat{2, 2n-1}}}{\om_{\widehat{2, 2n-3}}} \quad det \left  [\ba{ll}   
 \frac{\om_{\widehat{1, 2n-3}, 2n-1}}{\om_{\widehat{1, 2n-2}}}  & 1 \\    \frac{\om_{\widehat{1, 2n-3}, 2n}}{\om_{\widehat{1, 2n-2}}}   &   \frac{\om_{\widehat{2, 2n-2}, 2n}}{\om_{\widehat{2, 2n-1}}}  \ea  \right ] \no  \\  \label{kc}
\eea
On the other hand, letting $r=2n-1, s=2n$ in  (\ref{rs}),  one also has 
\bea
a_{2n-1, 2n} & = & \sum_{k=1 }^{n-1} \frac{\om_{\widehat{1, 2k}}}{\om_{\widehat{1, 2k-2}}}    \quad det \left  [\ba{ll}    \frac{\om_{\widehat{2, 2k-2}, 2n-1}}{\om_{\widehat{2, 2k-1}}}  & \frac{\om_{\widehat{1, 2k-1},2n-1}}{\om_{\widehat{1, 2k}}}  \\    \frac{\om_{\widehat{2, 2k-2}, 2n}}{\om_{\widehat{2, 2k-1}}}                                                &                                 \frac{\om_{\widehat{1, 2k-1}, 2n}}{\om_{\widehat{1, 2k}}}  \ea  \right ] 
   \no \\
&+& \sum_{k=1 }^{n-2} \frac{\om_{\widehat{2, 2k+1}}}{\om_{\widehat{2, 2k-1}}}    \quad det \left  [\ba{ll}   
 \frac{\om_{\widehat{1, 2k-1}, 2n-1}}{\om_{\widehat{1, 2k}}}  & \frac{\om_{\widehat{2, 2k},2n-1}}{\om_{\widehat{2, 2k+1}}}  \\    \frac{\om_{\widehat{1, 2k-1}, 2n}}{\om_{\widehat{1, 2k}}}  &   \frac{\om_{\widehat{2, 2k}, 2n}}{\om_{\widehat{2, 2k+1}}}  \ea  \right ] + \frac{\om_{\widehat{2, 2n-3}, 2n-1, 2n }}{\om_{\widehat{2, 2n-3}}}   \label{fc}
\eea
Comparing (\ref{kc}) with (\ref{fc}), we obtain 
\[  d_{2n, 2n-1}+  \frac{\om_{\widehat{2, 2n-1}}}{\om_{\widehat{2, 2n-3}}} \quad det \left  [\ba{ll}   
 \frac{\om_{\widehat{1, 2n-3}, 2n-1}}{\om_{\widehat{1, 2n-2}}}  & 1 \\    \frac{\om_{\widehat{1, 2n-3}, 2n}}{\om_{\widehat{1, 2n-2}}}   &   \frac{\om_{\widehat{2, 2n-2}, 2n}}{\om_{\widehat{2, 2n-1}}}  \ea  \right ]  =\frac{\om_{\widehat{2, 2n-3}, 2n-1, 2n }}{\om_{\widehat{2, 2n-3}}} .   \]
A similar argument to (\ref{faa}), one has  $ d_{2n, 2n-1}= \frac{\om_{\widehat{1, 2n}}}{\om_{\widehat{1, 2n-2}}}$. This completes the proof.
\end{proof} 
One remarks that $L$ and $D$ can be  composed of  the $ 2 \times 2$ blocks and investigated  to evaluate a pfaﬃan analogue of q-Catalan Hankel determinants \cite{iz}. 

To construct TNNP from the factorization (\ref{fa}), we can assume each entry in $D$ is non-negative and $L$ is a totally non-negative $(0,1)$-matrix, that is, each entry in $L$ is 0 or 1 and each minor of $L$  is non-negative. To obtain a totally non-negative $(0,1)$-matrix  $L$,  we consider the following double echelon form \cite{fj}:\\
(1) each row has the form $ (0,0,\cdots, 0, 1,1 , \cdots, 1, 0,  \cdots,0)$ except the first row. \\
(2)The first and last nonzero entries in row $i+1$  are not to the left of the first and last nonzero entries in row $i$, respectively (   $i  = 2,3, \cdots  , 2n-1 $ ). It is shown (p.40 in \cite{fj} ) that if $L$ is of double echelon form and doesn't contain  the sub-matrix 
\be  \left  [\ba{lll} 1 & 1 &  0  \\ 1& 1  &1  \\  0 & 1 & 1  \ea \right ],  \label{ta} \ee
then $L$ is a totally non-negative $(0,1)$-matrix. For example, 
 \[  \left  [\ba{llllll} 1  & 0 & 0  &0 &0  &0  \\  0 & 1 &0   &0 & 0  &0 \\  0 & 1   &1 &  0 &  0 & 0  \\  0 &  1 &    1&1 & 0  &0 \\  0 & 0  &0&  1  & 1  &0 \\    0 &  0  &  0 &0 & 1&1   \ea  \right ]   \]
is a totally non-negative matrix.

\begin{theorem}
In the factorization (\ref{fa}), assume each entry $d_{i,j}, j>i $  in  $D$ is non-negative.  $L$ is a $(0,1)$-matrix of double echelon form and doesn't contain  the sub-matrix (\ref{ta}). Then $A$ is a TNNP. 
\end{theorem}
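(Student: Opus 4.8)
The plan is to reduce the total non-negativity of the pfaffian of $A=LDL^T$ to two separately manageable facts about the factors $L$ and $D$, joined by a pfaffian analogue of the Cauchy--Binet formula. The starting observation is that for any even index set $I=\{a_{i_1}<\cdots<a_{i_{2m}}\}$ the principal sub-matrix factors as $A_I=\tilde L\,D\,\tilde L^{\,T}$, where $\tilde L=L_{I,:}$ is the $2m\times 2n$ matrix obtained from $L$ by keeping only the rows indexed by $I$. I would then invoke the minor summation (Cauchy--Binet type) formula for pfaffians, in the same circle of identities as \cite{is}, to obtain
\be
Pf(A_I)=\sum_{\substack{K\subseteq\{1,\dots,2n\}\\ |K|=2m}}\det(L_{I,K})\;Pf(D_K),
\ee
where $L_{I,K}$ is the $2m\times 2m$ minor of $L$ on rows $I$ and columns $K$, and $D_K$ is the principal sub-matrix of $D$ on $K$. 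It then suffices to prove that every summand is non-negative.

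For the first factor, $\det(L_{I,K})\ge 0$ for all $I,K$ is exactly the statement that $L$ is totally non-negative, which holds by hypothesis: $L$ is a $(0,1)$-matrix in double echelon form avoiding the forbidden block (\ref{ta}), so by the criterion quoted just before the theorem (p.40 of \cite{fj}) all of its minors are non-negative.

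For the second factor I would exploit the tridiagonal structure of $D$. Since $D_{\alpha\beta}\ne 0$ only when $|\alpha-\beta|=1$, the pfaffian $Pf(D_K)$ is a signed sum over perfect matchings of $K$ in which every matched pair is a pair of consecutive integers. Such a matching exists only when $K$ splits into maximal runs of consecutive integers, each of even length, and in that case the matching is forced, since a path on $2l$ vertices has a unique domino tiling. I would then check that the sign is $+1$: the sub-matrix $D_K$ is block diagonal along these runs, so $Pf(D_K)$ factors as the product of the pfaffians of the runs, and each run-pfaffian equals the product of its superdiagonal entries by the formula $Pf(D)=d_{12}d_{34}\cdots d_{2n-1,2n}$ applied to that run. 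Hence $Pf(D_K)$ is either $0$ or a product of entries $d_{2k-1,2k}, d_{2k,2k+1}$, all non-negative by assumption, so $Pf(D_K)\ge 0$ in every case. Combining the two non-negativity statements termwise in the displayed identity gives $Pf(A_I)\ge 0$ for every $I$, which is precisely the definition of $A$ being a TNNP.

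The step I expect to be the crux is the careful justification of the pfaffian Cauchy--Binet identity with the correct index and sign conventions, together with the sign bookkeeping in the block decomposition of $Pf(D_K)$; once those two sign computations are pinned down the conclusion is immediate, since total non-negativity of $L$ and non-negativity of the superdiagonal of $D$ are handed to us. A secondary point to verify is the degenerate and boundary conventions (the case $m=n$, and singular instances in which some $d_{2k-1,2k}$ vanish), but these merely contribute additional zero summands and do not disturb the non-negativity.
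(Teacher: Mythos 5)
Your proposal is correct and follows essentially the same route as the paper's proof: the pfaffian minor-summation (Cauchy--Binet) identity of \cite{is} applied to $A=LDL^T$, total non-negativity of $L$ via the double echelon criterion of \cite{fj}, and non-negativity of each $Pf(D_K)$ coming from the tridiagonal structure of $D$. Your domino-tiling argument for why $Pf(D_K)$ is either $0$ or a product of superdiagonal entries actually spells out the sign verification that the paper only asserts (its remark that nonzero terms require $K$ to consist of runs of successive integers), so the proposal is, if anything, slightly more complete than the published argument.
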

\begin{proof}
From (\ref{ex}),  each term is non-negative, Then $a_{i,j}$ is non-negative for each $i,j$. Also, if one considers any sub-pfaffian  of $A$, then \cite{is}
\be 
Pf(A_{\alpha})= \sum_{K  \subset [2n], |K|=|\alpha|} Pf ( D_K)  \mathrm {det} (L_{ \alpha, K}),  \label{su}
\ee 
where $ |\alpha|   \leq  2n$ and $ (L_{ K, \alpha}) $ is the matrix obtained from $L$ by choosing the  K-columns corresponding to the index K and the $\alpha$-rows corresponding to the index $\alpha$. Here we notice that the index $K$ is composed of successive integers due to the tridiagonal form. Since $Pf ( D_K)$ and 
$ \mathrm {det} (L_{ \alpha, K})$  are non-negative for any indices $\alpha$ and $K$, it's known that $A$ is a TNNP by (\ref{su}). 
\end{proof}

Next, we also consider the non-singular extension of TNNP from $2n \times 2n  $ to $ 2(n+1) \times 2(n+1) $ using this factorization $(\ref{fa})$. One extends the unit  lower-triangular matrix $L$ to the unit  lower-triangular matrix  $\hat L $ as follows:
\bea
  \hat L =\left  [\ba{lll} L   &\vec 0^T & \vec 0^T  \\  \vec 0  &  1 & 0 \\ \vec 0 &0 & 1   \ea  \right ],   \label{et}
\eea 
where $ \vec 0= ( 0,0, \cdots ,0)$ is the zero vector ; furthermore, $D$ extends to the tridiagonal  form 
\bea
  \hat D =\left  [ \ba{cc} D   & \ba {cc}   \vec 0^T  & \vec 0^T  \\      d_{2n, 2n+1} & 0    \ea   \\
	\ba {cc}   \vec 0  &   -d_{2n, 2n+1} \\  \vec 0 & 0    \ea 
	& \ba {cc}    0  &   d_{2n+1, 2n+2} \\   -d_{2n+1, 2n+2}    & 0    \ea  \ea  \right ]. \label{etc}
\eea 
Then we have the following 
\begin{theorem}
In the factorization (\ref{fa}), assume $A$ is a   $2n \times 2n  $  TNNP. If $ d_{2n, 2n+1} \geq 0 $ and $ d_{ 2n+1,2n+2} \geq  0$, then $\hat A=\hat L \hat D \hat L^T$ defined in (\ref{et}) and  (\ref{etc}) is a $2(n+1) \times 2(n +1) $ TNNP. 
\end{theorem}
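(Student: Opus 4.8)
The plan is to make the block factorization $\hat A=\hat L\hat D\hat L^{T}$ completely explicit and then reduce every sub-pfaffian of $\hat A$ to a sub-pfaffian of $A$. First I would carry out the multiplication in (\ref{et})--(\ref{etc}). Writing $\hat L=\mathrm{diag}(L,I_2)$ and $\hat D=\left[\ba{cc} D & B \\ -B^{T} & E\ea\right]$, where $B$ is the $2n\times 2$ block whose only nonzero entry is $B_{2n,1}=d_{2n,2n+1}$ and $E=\left[\ba{cc} 0 & d_{2n+1,2n+2}\\ -d_{2n+1,2n+2} & 0\ea\right]$, one gets $\hat A=\left[\ba{cc} LDL^{T} & LB\\ -(LB)^{T} & E\ea\right]$. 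The decisive observation is that $L$ is unit lower-triangular, so its $2n$-th column is the standard basis vector $e_{2n}$; hence $LB$ has the single nonzero entry $(LB)_{2n,1}=d_{2n,2n+1}$. Consequently the top-left block of $\hat A$ is exactly $A$, and the only new nonzero entries of $\hat A$ are $\hat a_{2n,2n+1}=d_{2n,2n+1}$ and $\hat a_{2n+1,2n+2}=d_{2n+1,2n+2}$, together with their skew-symmetric partners. In graph-theoretic terms, the new index $2n+2$ is coupled only to $2n+1$, and $2n+1$ only to $2n$ and $2n+2$.

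Next I would show that each principal sub-pfaffian $Pf(\hat A_I)$, with $I\subseteq\{1,\dots,2n+2\}$ and $|I|$ even, is non-negative, via the Laplace expansion of the pfaffian along the largest index of $I$; recall that when $I=\{i_1<\cdots<i_{2m}\}$ and one expands along $i_{2m}$, the term pairing $i_{2m}$ with $i_{2m-1}$ carries sign $+1$. There are four cases. If $2n+1,2n+2\notin I$, then $\hat A_I=A_I$ and $Pf(\hat A_I)=\om_I\ge 0$. If $2n+2\in I$ but $2n+1\notin I$, the whole row $2n+2$ of $\hat A_I$ vanishes and $Pf(\hat A_I)=0$. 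If $2n+1\in I$ and $2n+2\notin I$, then $2n+1$ is the largest index of $I$ and its only possible partner is $2n$, so $Pf(\hat A_I)=0$ unless $2n\in I$, in which case expansion along $2n+1$ gives $Pf(\hat A_I)=d_{2n,2n+1}\,\om_{I\setminus\{2n,2n+1\}}$. Finally, if $2n+1,2n+2\in I$, then $2n+2$ is the largest and $2n+1$ the second-largest index of $I$, so expanding along $2n+2$ forces the pairing $(2n+1,2n+2)$ and yields $Pf(\hat A_I)=d_{2n+1,2n+2}\,\om_{I\setminus\{2n+1,2n+2\}}$.

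In every nonzero case the answer is a product of one of the non-negative scalars $d_{2n,2n+1},\,d_{2n+1,2n+2}$ with a sub-pfaffian of $A$ on an index set contained in $\{1,\dots,2n\}$; since $A$ is a TNNP, each such $\om_{\bullet}\ge 0$, whence $Pf(\hat A_I)\ge 0$ and $\hat A$ is a TNNP. I would emphasize that only the two hypotheses $d_{2n,2n+1},d_{2n+1,2n+2}\ge 0$ and the total non-negativity of $A$ enter; in particular the (possibly signed) interior entries of $D$ never appear. This is precisely why I would not route the argument through the minor formula (\ref{su}): there $\hat L$ and $\hat D$ need not meet the double-echelon hypotheses of the earlier theorem, whereas the direct expansion sidesteps them.

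The main obstacle is the sign bookkeeping in the pfaffian expansion: one must confirm that pairing the two largest elements of $I$ indeed gives $+1$, and that $2n$ (respectively $2n+1$) is genuinely the second-largest element of $I$ whenever it is present --- this is immediate since no integer lies strictly between $2n$ and $2n+1$, nor between $2n+1$ and $2n+2$. The only other delicate point is the collapse $(LB)_{i,1}=d_{2n,2n+1}\,\delta_{i,2n}$, which hinges on $L$ being unit lower-triangular. Once these two facts are secured, the remaining case check is finite and routine, and a small instance such as $n=1$ (where $\hat A$ is a $4\times 4$ tridiagonal skew-symmetric matrix) already exhibits the whole mechanism.
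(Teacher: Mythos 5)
Your proof is correct, and it takes a genuinely different route from the paper's. Both arguments begin with the same structural observation --- that the two new columns of $\hat A$ vanish except for $\hat a_{2n,2n+1}=d_{2n,2n+1}$ and $\hat a_{2n+1,2n+2}=d_{2n+1,2n+2}$ --- though you obtain it by block multiplication of $\hat L\hat D\hat L^{T}$ (using that the last column of the unit lower-triangular $L$ is the standard basis vector), while the paper reads it off from the expansion (\ref{ex}). After that the proofs diverge: the paper feeds $\hat L,\hat D$ into the minor summation formula (\ref{su}), invokes Theorem 3.1 to assert that all entries of $L$ and all $d_{i,j}$, $j>i$, are non-negative, and argues that the surviving determinants $\det(\hat L_{\alpha,K})$, for $K$ a block of successive integers meeting $\{2n+1,2n+2\}$, equal $0$ or $1$, so every term in (\ref{su}) is non-negative. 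You instead expand each sub-pfaffian $Pf(\hat A_I)$ along its largest index and reduce, in four cases, to $\om_I$, $0$, $d_{2n,2n+1}\,\om_{I\setminus\{2n,2n+1\}}$, or $d_{2n+1,2n+2}\,\om_{I\setminus\{2n+1,2n+2\}}$, each manifestly non-negative. Your route is more elementary and self-contained, and it has a real advantage: it uses only the stated hypotheses ($A$ a TNNP and the two new $d$'s non-negative). The paper's appeal to Theorem 3.1 for non-negativity of the entries of $L$ and of the interior $d_{i,j}$ tacitly requires the denominators $\om_{\widehat{1,2k}}$, $\om_{\widehat{2,2k-1}}$ there to be nonzero, which a TNNP (as opposed to a totally positive pfaffian) need not guarantee; your expansion never touches those entries, so it also covers such degenerate cases. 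What the paper's approach buys is uniformity with the machinery of (\ref{ex})--(\ref{su}) running through Theorems 3.1 and 3.2; what yours buys is a shorter, hypothesis-minimal argument whose only delicate point --- the sign $+1$ when pairing the two largest indices in the pfaffian expansion, legitimate because no integer lies strictly between $2n$ and $2n+1$ or between $2n+1$ and $2n+2$ --- you identify and verify correctly.
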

\begin{proof}
We consider  only the sub-pfaffians   of $\hat A$ containing the  $(2n+1)$st  or  $(2n+2)$nd columns. From Theorem 3.1, it is known that  each entry in $L$ and $d_{i,j}, j>i $  in $D$ is non-negative. The equation (\ref{ex}) implies the $(2n+1)$st  and $(2n+2)$nd columns of $\hat A$ are zeros except $\hat a_{2n, 2n+1} \geq 0$ and $\hat a_{2n+1, 2n+2}  \geq 0 $. For any sub-pfaffian  of $\hat A$ containing the  $(2n+1)$st  or  $(2n+2)$nd columns, the form $\hat L$ in (\ref{et}) obtains that the non-zero sub-pfaffians  containing the  $(2n+1)$st  or  $(2n+2)$nd columns are  the indices $K$  composed of successive integers  using the formula (\ref{su}). Also, each  determinant in (\ref{su}) is 0 or 1  for the indices $K$  composed of successive integers containing   $(2n+1)$  or  $(2n+2)$. Since $ Pf ( D_K)  \geq 0$ for each index K, we see that non-zero sub-pfaffians  containing the  $(2n+1)$st  or  $(2n+2)$nd columns are non-negative. This completes the proof.

\end{proof}

\section{Concluding Remarks}
\indent 

In this article, one investigates the construction of TNNP from the perfect matching in graph theory and lattice path enumeration to obtain non-singular solitons solutions for the BKP  equation (\ref{bkp}).  Also, one considers the factorization of TNNP into the  product of a lower-triangular matrix and a skew-symmetric tridiagonal matrix, thus establishing the TNNP and its extension. In the modified BKP equation \cite {gc, hi, wl} and the modified Veselov-Novikov equation \cite{ba, it, qp},  a new  $\tau$-function is  obtained  via the Darboux transformation to make soliton solution, which could lead us to consider  the extension of  a given TNNP. The extension  from a given TNNP to a new TNNP will be an interesting issue. Finally, the integrable deformations of affine surfaces are  described via the Nizhnik-Veselov-Novikov equation \cite{kp} and then TNNP could be  used to investigate these deformations.  Such studies  are deferred to future publication.

\subsection*{Acknowledgments}
This work is supported in part by the National Science and Technology Council of Taiwan under
Grant No. NSC 114-2115-M-606-001.

\end{document}